\let\csname equation*\endcsname\relax
\let\csname endequation*\endcsname\relax
\newcommand{\cC}{\mathcal{C}}
\newcommand{\cX}{\mathcal{X}}
\newcommand{\cY}{\mathcal{Y}}
\newcommand{\cL}{\mathcal{L}}
\newcommand{\fc}{f_{C}}
\newcommand{\rhat}{\hat{R}}
\newcommand{\cF}{\mathcal{F}}
\newcommand{\erm}{\hat{h}}
\newcommand{\cS}{S}
\newcommand{\fh}{f_h}
\newcommand{\tfh}{\tilde{f_h}}
\newcommand{\layer}{B_nB_{n-1}\cdots B_2}
\newcommand{\CNOT}{\text{CNOT}}
\newcommand\myeqi{\mathrel{\stackrel{\makebox[0pt]{\mbox{\normalfont\tiny (i)}}}{=}}}
\newcommand\myeqii{\mathrel{\stackrel{\makebox[0pt]{\mbox{\normalfont\tiny (ii)}}}{=}}}
\newcommand\myeqiii{\mathrel{\stackrel{\makebox[0pt]{\mbox{\normalfont\tiny (iii)}}}{=}}}
\newcommand\myeqvi{\mathrel{\stackrel{\makebox[0pt]{\mbox{\normalfont\tiny (vi)}}}{=}}}
\newcommand\myineqiv{\mathrel{\stackrel{\makebox[0pt]{\mbox{\normalfont\tiny (iv)}}}{\leq}}}
\newcommand\myineqv{\mathrel{\stackrel{\makebox[0pt]{\mbox{\normalfont\tiny (v)}}}{\leq}}}
\newcommand\myineqvii{\mathrel{\stackrel{\makebox[0pt]{\mbox{\normalfont\tiny (vii)}}}{\leq}}}
\newtheorem{theorem}{Theorem}
\newtheorem{lemma}{Lemma}
\newtheorem{definition}{Definition}
\newtheorem{proposition}{Proposition}
\newcommand{\bk}[1]{\textcolor{black}{#1}}
\begin{document}
\title{Sample Complexity of Learning Parametric Quantum Circuits}
\author{Haoyuan Cai}
\address{Center for Quantum Information, IIIS, Tsinghua University, Beijing
100084, People's Republic of China}
\author{Qi Ye}
\address{Center for Quantum Information, IIIS, Tsinghua University, Beijing
100084, People's Republic of China}

\author{Dong-Ling Deng}
\ead{dldeng@tsinghua.edu.cn}
\address{Center for Quantum Information, IIIS, Tsinghua University, Beijing
100084, People's Republic of China}
\address{Shanghai Qi Zhi Institute, 41th Floor, AI Tower, No. 701 Yunjin Road, Xuhui District, Shanghai 200232, China}

\begin{abstract}
Quantum computers hold unprecedented potentials for  machine learning applications. Here, we prove that physical quantum circuits are PAC (probably approximately correct) learnable on a quantum computer via empirical risk minimization: to learn a parametric quantum circuit with at most $n^c$ gates and each gate acting on a constant number of qubits, the sample complexity is bounded by $\tilde{O}(n^{c+1})$. In particular, we explicitly construct a family of variational quantum circuits with $O(n^{c+1})$ elementary gates arranged in a fixed pattern, which can represent all physical quantum circuits consisting of at most $n^c$ elementary gates. Our results provide a valuable guide for quantum machine learning in both theory and practice.
\end{abstract}
\maketitle
\section{Introduction}

Over the past few decades, machine learning, especially deep learning,  has made dramatic progress \cite{Jordan2015Machine,lecun2015deep} in a wide range of tasks, such as playing the game of Go \cite{silver2016mastering,Silver2017Mastering}, protein structure prediction \cite{senior2020improved}, and  computer vision \cite{krizhevsky2012imagenet}, etc. 
More recently, the interplay between machine learning and quantum physics has attracted tremendous interest \cite{Sarma2019Machine,Biamonte2017Quantum,Ciliberto2017Quantum, Dunjko2018Machine,Carleo2019Machine}, giving birth to an emergent research frontier of quantum machine learning. A number of notable quantum algorithms, such as the Harrow-Hassidim-Lloyd (HHL) algorithm \cite{Harrow2009Quantum}, quantum generative models \cite{gao2018quantum}, and quantum support vector machine \cite{Rebentrost2014Quantum}, have been designed to enhance, speed up, or innovate machine learning with quantum devices. These algorithms bear the intriguing potentials of exhibiting exponential advantages compared to their classical counterparts, although subtle caveats do exist and require careful examinations in practice \cite{Aaronson2015Read}.

In 1984, Valiant introduced the  PAC  learning model \cite{valiant1984theory}, which gives a complexity-theoretical foundation and a mathematically rigorous framework for studying machine learning. Since then, the PAC learning model has been  extensively studied in various machine learning scenarios  to understand why and when efficient learning is possible or not \cite{Haussler1990Probably,Shalev-Shwartz2014Understanding}. With the rapid progress in quantum computing \cite{Arute2019Quantum,zhong2020quantum,wu2021strong}, practical applications of quantum machine learning have become more and more realistic \cite{Schuld2017Implementing,Schuld2020Circuitcentric,Beer2020Training,Cong2019Quantum,Watts2019Exponential}.  A natural problem is then to generalize the PAC learning model to quantum learning scenarios. Indeed, notable progress has been made along this direction \cite{Arunachalam2017Optimal,Chung2021Sample,Arunachalam2017Survey,Arunachalam2020Quantum,Heidari2021Theoretical,Sweke2021Quantum,bu2021statistical,caro2020pseudo,du2021efficient,caro2021encoding,Cheng2015Learnability}. For example, in Ref. \cite{Chung2021Sample} Chung and Lin have studied the sample complexity of learning  quantum channels and demonstrated that we can PAC-learn a polynomial-size quantum circuit with a polynomial number of samples. In addition, in Ref. \cite{bu2021statistical} Bu \textit{et al}. investigated the Rademacher complexity of quantum circuits in the framework of quantum resource theories \cite{Chitambar2019Quantum}. They introduced a resource measure of magic for quantum channels based on the $(p,q)$ group norm and found useful bounds for how the statistical complexity scales with resources in the quantum circuits. Yet, this fledgling research direction is still in its rapidly growing early phase and many important issues remain barely explored.

In this paper, we study the problem of the sample complexity for learning parametric quantum circuits. We focus on the supervised learning scenarios and prove that all the unitary physical quantum circuits are PAC learnable on a quantum computer via empirical risk minimization. More concretely, we prove the following two theorems: 1) any physical $n$-qubit quantum circuit consisting of at most $n^c$ unitary gates with each gate acting on a constant number of qubits can be represented in an exact fashion by a family of variational quantum circuits with $O(n^{c+1})$ elementary gates arranged in a fixed uniform pattern; 2) this family of variational quantum circuits is PAC learnable. Since most quantum circuits that can be efficiently implemented on a quantum computer,  such as the circuits for the Shor's algorithm \cite{shor1994algorithms} or the HHL algorithm \cite{Harrow2009Quantum}, contain at most a polynomial number of gates, our results imply that they are all PAC learnable with a quantum computer.

\begin{figure*}
\hspace*{0.1\textwidth}
\includegraphics[width=0.8\linewidth] {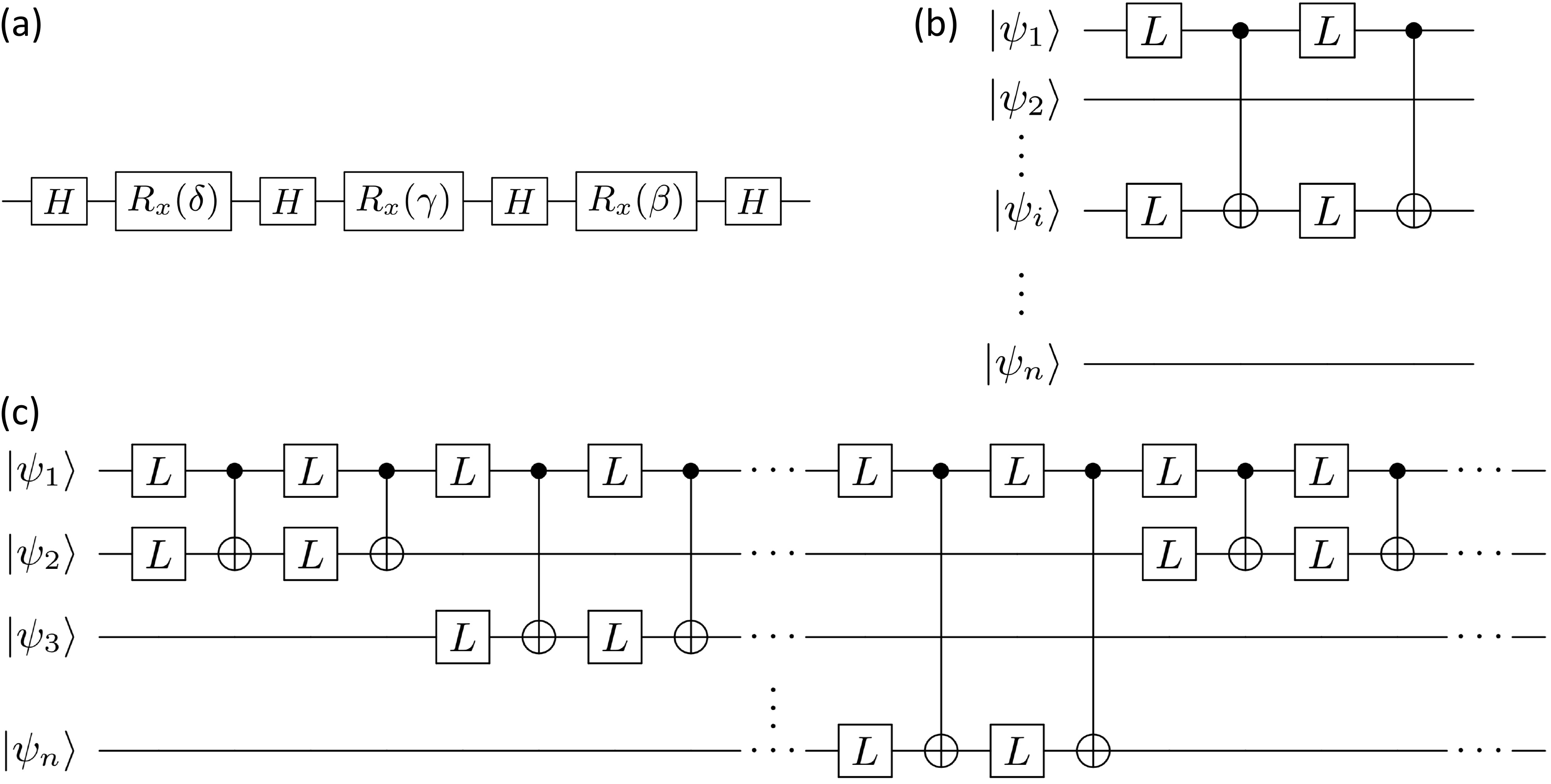}
\caption{Pictorial illustration for the construction of the hypothesis quantum circuits. (a) The elementary (level-$1$) block $L$. This block contains four  Hadamard gates and three single-qubit rotations along the $x$ direction with rotation angles parameterized by $\delta$, $\gamma$, and $\beta$, respectively. (b) The level-$2$ block $B_i$ constructed based on four level-$1$ block and two CNOT gates with the first qubit being the control qubit and the $i$-th qubit being the target one. (c) The constructed hypothesis quantum circuit, which consists of $M n^c$ repeated layers of $B_n B_{n-1}\cdots B_2$.}
\label{VariationF}
\end{figure*}

\section{Results}

\subsection{Notations and the general setting}
We define the concept class $\cC$ as the collection of all the $n$-qubit parametric quantum circuits with at most $n^c$ unitary gates, each gate acting on at most $b$ qubits ($b,c$ are constant numbers independent of $n$). We note that $\cC$ is general enough to include most quantum circuits in practical applications. Here, we study the learnability of the  quantum circuits in $\cC$ under the PAC learning framework \cite{Shalev-Shwartz2014Understanding}. Let $C \in \cC$  be any $n$-qubit circuit in this concept class. When we input an $n$-qubit pure state $\ket{\psi}_{\text{in}}$ to $C$, we will get an output $n$-qubit pure state $\ket{\psi}_{\text{out}} = C\ket{\psi}_{\text{in}}$. 
Therefore, $C$ can be viewed as a function $ \fc :  \cX  \rightarrow \cY$,
where its domain $\cX$ and range $\cY$ are both the set of all $n$-qubit pure states. In this work, we write $x \in 
\cX$ as an abbreviation of the $n$-qubit quantum state $\ket{\psi(x)}$, and similarly for $y \in \cY$. With these notations, we sometimes write $y=\fc(x)$ to denote $\ket{\psi(y)} = C\ket{\psi(x)}$ for simplicity.

We consider the supervised learning scenario \cite{Shalev-Shwartz2014Understanding} and  denote the training set of size $m$ as $\cS = \{(x_1,y_1),(x_2,y_2),\cdots,(x_m,y_m)\}$. Under the PAC learning framework, to learn the unknown circuit $C$, assume we have $m$ independent $n$-qubit input samples $\{x_1,x_2,\cdots,x_m\} \in \cX^m$, we can input them into $C$, and obtain output states $\{y_1,y_2,\cdots,y_m\} \in \cY^m$. 
The essential task of supervised learning is then to learn from $\cS$ a hypothesis function (here a quantum circuit $F$) that can approximate the target function $\fc(x)$. This might be accomplished by minimizing certain loss functions over a set of variational model parameters. More concretely, we construct a variational quantum circuit $F$ consisting of multiple gates with some of them having tunable parameters. By tuning these parameters, we can use $F$ to represent different functions $f_h: \cX \rightarrow \cY$, and we define our hypothesis space $\cF$ as the collection of all the functions $f_h$ that $F$ can represent. 
Given $m$ independent samples and a tunable quantum circuit $F$, we can use the following process to make our circuit $F$ become a good approximation of $C$. We can tune the parameters of $F$ according to the training set $\cS$, so that when we put the state $x_i$ ($i = 1,2,\cdots,m$) into the input of $F$, the output of $F$ will be a good approximation of $y_i$. By the PAC learning theory, assuming that $\cF$ has good generalization power, $f_h$'s decent performance on the training set can imply its good performance over the whole sample space.

The effectiveness of the above process is based on two assumptions. First, the space of $\cF$ should be large enough, so that given any set of samples $\cS=(x_i,y_i)_{i=1,2,\cdots,m}$, we can always find a function $f_h \in \cF$, such that $f_h(x_i)$ can approximate $y_i$ with a small error for all $i=1,2,\cdots,m$. Second, the space of $\cF$ should not be too large or complex, so that $\cF$ has favorable generalization power to generalize its performance from the training set $\cS$ to the true probability distribution that $\cS$ is sampled from. This is a reflection of the Occam's razor principle \cite{Shalev-Shwartz2014Understanding}. Therefore, we need to design a variational quantum circuit class $\cF$, which meets the following two requirements simultaneously in order to learn $f_C$:

\begin{itemize}
\item \textbf{R1}: For any $C \in \cC$, there exists a hypothesis function $f_h \in \cF$, such that $f_h(x) = f_C(x)$ for all $x \in \cX$.
\item \textbf{R2}: The hypothesis space $\cF$ satisfies  the PAC learnablity.
\end{itemize}

We note that the first requirement \textbf{R1} is stronger than the first assumption, because the function $f_h$ in \textbf{R1} is the same as $f_C$. Hence, the training error $f_h$ on the set of samples $\cS$ is necessarily zero, whereas the first assumption only requires that $f_h$ has a small training error for $\cS$. In supervised learning, obtaining high-quality training samples is usually resource-demanding in practice. Thus, studying the sample complexity becomes crucial. In the following, we will study the sample complexity of learning parametric quantum circuits and rigorously prove that any physical quantum circuit is PAC learnable.

\subsection{A family of universal variational circuits}
To meet \textbf{R1}, we should ensure that  $F$ has representation power for all the $n$-qubit parametric quantum circuits $C$ in $\cC$. 
We observe that any quantum circuit $C$ can be decomposed as a sequence of $O(n^c)$ number of $H$ gates, $R_x(\theta)$ gates, and $\CNOT$ gates (see the Proposition \ref{expressC} in Appendix A). Thus in our construction of $F$, we also use these three kinds of gates and arrange them into a uniform pattern, so that its scaling to quantum circuits with more qubits is clear. The construction of $F$ is illustrated in Fig. \ref{VariationF},  
which is based on block assembling, i.e., assembling some relatively small gadgets to form a more complicated block. By convention, we denote the three Pauli matrices by $X,Y$, and $Z$. A well-known result about quantum circuit states that any single-qubit unitary gate can be expressed as $e^{i\alpha}R_z(\beta)R_x(\gamma)R_z(\delta)$, where $\alpha, \beta,\gamma$, and $\delta$ are four real numbers, and $R_z(\theta)=e^{-i\theta Z/2}$ and $R_x(\theta)=e^{-i\theta X/2}$ are the rotation operators along the z-axis and x-axis on the Bloch Sphere,  respectively  \cite{nielsen2002quantum}. Inspired by this, we define a set of basic gadgets (we call them level-$1$ blocks) to be $L = \{HR_x(\beta)HR_x(\gamma)HR_x(\delta)H$ $\arrowvert$  $\beta,\gamma,\delta\in[0,2\pi)\}$. In this way, we can tune the parameters of a level-$1$ block so that it can represent all the single-qubit unitary gates  up to an irrelevant global phase factor   $e^{i\alpha}$. In addition, when we set $\beta=\gamma=\delta=0$, the level-$1$ block will reduce to the  identity gate.

Using level-$1$ blocks we can construct level-$2$ blocks $B_{i}$ $(i=2,3,\cdots,n)$. First we put two level-$1$ blocks (denoted as $L$) at qubit $1$ and qubit $i$, and then insert two $\CNOT_{1i}$ gates, as shown in Fig. \ref{VariationF} (b). Here, we use $\CNOT_{1i}$ to denote the controlled-NOT gate between the first and $i$-th qubits, with the first qubit being the control qubit and the $i$-th one being the target one. With this, the desired hypothesis quantum circuit $F$ can be constructed as:
\begin{eqnarray}
F=(B_n B_{n-1}\cdots B_2)^{M n^c},
\end{eqnarray}
where $M$ is a constant independent of the number of qubits $n$. We mention that the hypothesis circuit $F$ has a uniform structure for arbitrary system sizes. In addition, only the x-rotations contain variational parameters, and it is straightforward to obtain that the total number of parameters used for defining $F$ scales as $O(n^{c+1})$.

For an arbitrary quantum circuit $U$, we say that it can be represented by the variational circuit $F$ if there exists a solution to the parameters (denoted collectively as $\boldsymbol{\theta}$) in $F$   such that $F=U$ up to an irrelevant global phase. Now, we are ready to give the first theorem stating that for any $C \in \cC$, we can use $F$ to represent it:

\newcommand{\theorema}{
For any $C \in \cC$, there exists a hypothesis function $f_h \in \cF$, such that $f_h(x) = f_C(x)$ for all $x \in \cX$.
}
\begin{theorem}
\label{theorema}
\theorema
\end{theorem}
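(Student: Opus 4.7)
\emph{Proof proposal.} The plan is a direct, constructive argument: given $C \in \cC$, I will exhibit parameters for $F$ making $F = e^{i\phi} C$ for some global phase $\phi$, which suffices because the output pure state is phase-invariant and hence $f_h(x) = f_C(x)$ for every $x \in \cX$. The first step is to invoke Proposition~\ref{expressC} in Appendix~A, which decomposes $C$ as a product $C = G_K \cdots G_2 G_1$ of at most $K = O(n^c)$ elementary gates from $\{H, R_x(\theta), \CNOT\}$. The remainder of the proof realizes each $G_k$ by a bounded number of layers $B_n B_{n-1}\cdots B_2$, with every other block in those layers idled to the identity. The idling is essentially free: setting $\beta=\gamma=\delta=0$ in a level-$1$ block gives $L = H^4 = I$, and then zeroing all six rotation angles inside $B_i$ gives $B_i = \CNOT_{1i}^2 = I$.

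To implement a single-qubit gate $U$ on qubit $j\ge 2$, I take one layer in which only $B_j$ is nontrivial and pick $L_j = U$ with the three remaining level-$1$ blocks inside $B_j$ equal to the identity; the two CNOTs then cancel, yielding $B_j = I\otimes U$. A single-qubit gate acting on qubit $1$ is handled analogously inside $B_2$. For $\CNOT_{1,j}$, I use two consecutive layers with only the two $B_j$'s nontrivial; their product takes the form $A_4\CNOT A_3\CNOT A_2\CNOT A_1\CNOT$, where each $A_k$ is an arbitrary tensor product of two single-qubit unitaries coming from the level-$1$ blocks. By the Shende--Bullock--Markov theorem any element of $U(4)$ is realizable as $A_4'\CNOT A_3'\CNOT A_2'\CNOT A_1'$ with tensor-product $A_k'$; applying this to $U = I$ and then multiplying by a final $\CNOT$ on the right gives parameters for which the two-layer product equals $\CNOT_{1,j}$ exactly. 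For a $\CNOT_{i,j}$ with $i,j \neq 1$, I use the identity $\CNOT_{i,j} = \mathrm{SWAP}_{1,i}\,\CNOT_{1,j}\,\mathrm{SWAP}_{1,i}$, decompose each $\mathrm{SWAP}_{1,i}$ into three CNOTs involving qubit~$1$ (inserting Hadamards on both qubits to turn $\CNOT_{1,i}$ into $\CNOT_{i,1}$ when required), and apply the previous construction to each constituent CNOT; this still uses only $O(1)$ layers per gate.

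Summing over all $K = O(n^c)$ elementary gates, each costing $O(1)$ layers, a sufficiently large absolute constant $M$ makes the $Mn^c$ available layers accommodate the entire construction, with any surplus layers zeroed out to identity. The resulting $F$ equals $C$ up to a global phase, giving $f_h = f_C$ on $\cX$. I expect the main obstacle to lie in the $\CNOT_{1,j}$ step: the interleaving pattern of CNOTs and level-$1$ blocks inside two consecutive $B_j$'s is rigid, so one must verify that this rigid pattern can realize $\CNOT_{1,j}$ exactly rather than approximately. Parameter counting (twenty-four real angles against fifteen real dimensions of $SU(4)$) is reassuring, but the rigorous check requires the explicit Shende--Bullock--Markov decomposition and confirming that the tensor-product factors at the junction between the two $B_j$'s match the form produced by that decomposition.
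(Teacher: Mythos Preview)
Your proposal is correct and follows essentially the same route as the paper: invoke Proposition~\ref{expressC}, realize each elementary factor by $O(1)$ layers of $B_nB_{n-1}\cdots B_2$ while idling the rest, and absorb leftover layers as identities. Two small points of contrast: Proposition~\ref{expressC} already outputs only single-qubit unitaries and $\CNOT_{1j}$ gates (not arbitrary $\CNOT_{i,j}$), so your SWAP detour is superfluous; and for the key $\CNOT_{1,j}$ step the paper, in Proposition~\ref{blockrep}, writes down explicit rotations $W_1,\dots,W_4$ such that $(W_4\otimes W_3)\CNOT_{1j}(I\otimes W_2)\CNOT_{1j}(I\otimes W_1)=e^{-i\pi/4}\CNOT_{1j}$, rather than appealing to Shende--Bullock--Markov---this directly resolves the ``rigid interleaving'' worry you flagged, since the first $B_j$ supplies the two CNOTs and the second $B_j$ (with its CNOTs cancelled) supplies the outer $W_4\otimes W_3$.
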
 
\begin{proof}
We first give a high-level intuition for the proof.  We note that any gate acting on a constant number of qubits can be decomposed into a quantum circuit with a constant number of elementary gates, namely the CNOT gates, $H$ gates, and $R_x(\theta)$ gates. Thus, any $C \in \cC$ can be decomposed into a quantum circuit with $O(n^c)$ elementary gates. In addition, as our hypothesis quantum circuit $F$ consists of $Mn^c$ layers and every two layers can represent one arbitrary elementary gate acting on any pair of qubits through tuning the parameters properly, we can prove that $C$ can be represented by $F$ in an exact fashion.

The complete proof is as follows. First, we note that $C$ consists of $O(n^c)$ elementary gates by the definition of $\cC$.
By Proposition \ref{expressC} in Appendix A, the circuit $C$ can be written as a product form $U_l U_{l-1} \cdots U_1$, where $l = O(n^c)$, and each $U_i$ is either $\CNOT_{1j}$ or a single-qubit unitary gate $V_j$ on qubit $j$.

Denoting $B_nB_{n-1}\cdots B_2$ as one layer of level-$2$ blocks,  then we define a block string of $d$ layers as follows:
\begin{eqnarray}
(B_nB_{n-1}\cdots B_2)^d. \nonumber
\end{eqnarray}
We define $T$ as the minimal number so that a block string of  $T$ layers can represent $U_l U_{l-1} \cdots U_1$. As our hypothesis circuit $F$ contains $Mn^c$ layers in total, we will show that to represent $U_l U_{l-1} \cdots U_1$, the minimal number of layers needed in the block string is no greater than $Mn^c$. Then, when the previous $T$ layers in $F$ have represented $C$ exactly, by the first part of Proposition \ref{blockrep} in Appendix A, we can set the remaining $(Mn^c - T)$ layers to be the identity gate on the $n$ qubits. Therefore, we can show that $F$ can represent $C$ in an exact fashion and complete the proof of Theorem \ref{theorema}.

To show that $T \leq Mn^c$, we will prove that for each $U_i$, $i=1,2,\cdots,l$, we only need two layers to represent $U_i$ exactly. Then putting them together, we can prove that we need only $2l$ layers to represent $U_l U_{l-1} \cdots U_1$, which yields $T \leq 2l$.

We fix any $i \in \{1,2,\cdots,l\}$. By Proposition \ref{blockrep}, one layer $B_nB_{n-1}\cdots B_2$ can represent any single-qubit unitary gate $V_j$ acting on any qubit $j = 1,2,\cdots,n$ up to a global phase $e^{i\alpha}$. Moreover, two layers $(B_nB_{n-1}\cdots B_2)^2$ can represent $\CNOT_{1j}$ for any $j = 2,3,\cdots,n$. We note that $U_{i}$ is either a single-qubit gate $V_j$ acting on some qubit $j$, or a two-qubit gate $\CNOT_{1j}$. Therefore, we only need two layers at most to represent $U_{i}$ exactly.

As we have shown that $T \leq 2l = O(n^c)$, by choosing a large enough constant $M$, we can prove that $T \leq Mn^c$ and complete the proof of the theorem.
\end{proof}

Theorem \ref{theorema} shows that given any quantum circuit $C \in \cC$, there always exists a solution to the parameters, such that  the circuit $F$ can simulate the quantum circuit $C$ acting on the $n$-qubits with zero error. Therefore, given any training set $\cS = (x_i,y_i)_{i=1,2,\cdots,m}$ sampled independently from some distribution $P$ over $\cX \times \cY$, if we have $y_i = \fc(x_i)$ for all $i=1,2,\cdots,m$, we can find an instance $f_h \in \cF$ with zero training error. In fact, this theorem has a wider range of applications. 
When a quantum circuit consists of fewer than $n^c$ gates, we can add some identity gates after it, so our theorem covers all the quantum circuits containing no more than $n^c$ gates. In other words, we can use only $O(n^{c+1})$ gates, which are arranged in a uniform pattern, to represent all the circuits with $n^c$ gates or fewer. We remark that the number of  gates in many famous quantum circuits,  such as quantum support vector machine \cite{Rebentrost2014Quantum}, HHL algorithm \cite{Harrow2009Quantum}, and quantum Fourier transform \cite{coppersmith2002approximate}, scales polynomially with the number of qubits. Therefore, all of these circuits above can be represented exactly by our circuit $F$.

\subsection{PAC learnability of  \texorpdfstring{$\cF$}{}}
In the PAC setting, we usually assume that the input samples are randomly generated from certain unknown probability distribution. As a result, when the hypothesis space covers the underlying distribution and the training dataset is large enough, both the training and generalization error should be small. In this paper, our hypothesis space $\cF$ has been proved in Theorem \ref{theorema} to be able to cover all the parametric quantum circuits in $\cC$. Now, we study the sample complexity for training a circuit $F\in \cF$ to represent $C\in \cC$. To this end, we define a measure of the distance between two pure states in $\cY$, which is used as the loss function $\mathcal{L} : \cY \times \cY \rightarrow [0,1]$. Specifically, we define the loss function $\mathcal{L}(y_1,y_2)$ to be the trace distance of two quantum states $\ket{\psi(y_1)}$ and $\ket{\psi(y_2)}$ :
\begin{eqnarray}
\mathcal{L}(y_1,y_2) = \frac{1}{2} \| \ket{\psi(y_1)}\bra{\psi(y_1)} - \ket{\psi(y_2)}\bra{\psi(y_2)} \|_1, \nonumber
\end{eqnarray}
where $\| \cdot \|_1$ denotes the trace norm of a matrix. 
Given a hypothesis function $f_h \in \cF$ and the training set $\cS = (x_i,y_i)_{i=1,2,\cdots,m}$  sampled independently from some distribution $P$, we can define the empirical risk of $f_h$, which is also known as in-sample error:
\begin{eqnarray}
\rhat(f_h) = \frac{1}{m} \sum\limits_{i=1}^m \mathcal{L}(y_i, f_h(x_i)). \nonumber
\end{eqnarray}

The risk of a hypothesis function $f_h \in \cF$ is then defined as the average loss of $f_h$ over the probability distribution $P$:
\begin{eqnarray}
R(f_h)=\int_{\cX \times \cY} \mathcal{L}(y, f_h(x)) d P(x, y). \nonumber
\end{eqnarray}
Our goal is to find a hypothesis $f_h \in \cF$ and minimize its risk $R(f_h)$. As the parametric quantum circuit $C$ is a black box in our setting, we do not know the probability distribution $P$. In the learning process, we use the training set $\cS$ and find an empirical risk minimizer $\erm \in \cF$ over $\cS$. 
For convenience, given the training set $\cS$ and the probability distribution $P$, we define $\erm = \arg\min\limits_{h' \in \cF} \rhat(h')$ to be the empirical risk minimizer, and $h = \arg\min\limits_{h' \in \cF} R(h')$ to be the risk minimizer.  Now we formally introduce the definition of the PAC learnability for completeness \cite{wolf2018mathematical}:

\begin{definition}(PAC learnability) 
A hypothesis space $\cF$ is PAC learnable, if there exists a 
function $\nu:(0,1)^2 \rightarrow \mathbb{N}$, such that for all $(\epsilon, \delta) \in(0,1)^{2}$ and all probability measures P over $\cX \times \cY$, when the size of training set $|\cS| \geq \nu(\epsilon,\delta)$, we have
\begin{eqnarray}
\mathbb{P}_{\cS}\left(  R(\erm) - R(h)   \leq \epsilon\right) \geq 1- \delta.
\end{eqnarray}
Here $\mathbb{P}_{\cS}(A)$ denotes the probability that event $A$ happens over repeated sampling of the training set $\cS$.

\end{definition}

We note that when we randomly select a state $x \in \cX$, input it into the circuit $C$ and get an output state $y = \fc(x) \in \cY$, the resulting probability distribution $P$ of state pairs $(x,y)$ will satisfy $R(h) = 0$, because we can find an instance $f_h \in \cF$ equal to $\fc$ by Theorem \ref{theorema}. Therefore, if our $\cF$  is PAC learnable, after we prepare the training samples $\cS$ and get an empirical minimizer $\erm$, with probability $1-\delta$, the average loss of $\erm$ over $P$  will be no larger than $\epsilon$, i.e., $R(\erm) \leq \epsilon$. Now we are going to prove that $\cF$ is PAC learnable, and the sample complexity $\nu$ is polynomial with $n$, $1/\epsilon$, and $\ln\frac{1}{\delta}$.

\newcommand{\theoremb}{
The hypothesis space $\cF$ satisfies the PAC learnablity, with sample complexity $\nu(\epsilon,\delta) = O(\frac{1}{\epsilon^2}(n^{c+1}\ln\frac{n}{\epsilon}+\ln\frac{1}{\delta}))$.
}
\begin{theorem}
\label{theoremb}
\theoremb
\end{theorem}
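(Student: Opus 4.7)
The plan is to establish uniform convergence over $\cF$ via a covering-number argument, and then invoke the textbook ERM reduction $R(\erm)-R(h) \leq 2\sup_{f_h\in\cF}|R(f_h)-\rhat(f_h)|$ to conclude. Since $\cF$ is parametrized by a vector $\boldsymbol{\theta}\in[0,2\pi)^{N}$ of $N = O(n^{c+1})$ rotation angles, the natural idea is to discretize this parameter box at some mesh $\eta$, apply Hoeffding's inequality to each of the finitely many discretized hypotheses (the loss $\cL$ takes values in $[0,1]$), and close the argument with a union bound.

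The key quantitative input is a Lipschitz estimate converting a parameter mesh $\eta$ into a loss discrepancy. I would first note $\|R_x(\theta_1)-R_x(\theta_2)\|_{\mathrm{op}} \leq \tfrac12|\theta_1-\theta_2|$, and then use the telescoping identity
\begin{eqnarray}
U_N\cdots U_1 - V_N\cdots V_1 = \sum_{k=1}^{N} V_N\cdots V_{k+1}(U_k-V_k)U_{k-1}\cdots U_1, \nonumber
\end{eqnarray}
whose operator norm is at most $\sum_k\|U_k-V_k\|_{\mathrm{op}}$ because each unitary factor has norm one. Applied to the hypothesis circuit $F_{\boldsymbol{\theta}}$, this gives $\|F_{\boldsymbol{\theta}}-F_{\boldsymbol{\theta}'}\|_{\mathrm{op}} \leq N\eta/2$ whenever $\|\boldsymbol{\theta}-\boldsymbol{\theta}'\|_\infty \leq \eta$. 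Combining with the pure-state inequalities $\|(F-G)\ket{x}\|_2 \leq \|F-G\|_{\mathrm{op}}$ and $\tfrac12\|\ket{\psi}\bra{\psi}-\ket{\phi}\bra{\phi}\|_1 \leq \|\ket{\psi}-\ket{\phi}\|_2$, I then obtain $|\cL(y,f_{\boldsymbol{\theta}}(x))-\cL(y,f_{\boldsymbol{\theta}'}(x))| \leq N\eta/2$ uniformly in $(x,y)\in\cX\times\cY$.

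With this Lipschitz bound, I would place a regular $\|\cdot\|_\infty$-grid on $[0,2\pi)^N$ of mesh $\eta = \Theta(\epsilon/N)$, yielding a discrete cover $\tilde{\cF}$ of cardinality $(2\pi/\eta)^N = \exp(O(N\ln(N/\epsilon)))$. Every $f_h\in\cF$ has a representative $\tilde f_h\in\tilde{\cF}$ whose empirical and true risks lie within $\epsilon/4$ of those of $f_h$. Hoeffding's inequality applied to each member of $\tilde{\cF}$ gives $\mathbb{P}_{\cS}(|R(\tilde f_h)-\rhat(\tilde f_h)|>\epsilon/4) \leq 2\exp(-m\epsilon^2/8)$, and a union bound over $\tilde{\cF}$ shows that
\begin{eqnarray}
m = O\!\left(\tfrac{1}{\epsilon^2}\bigl(N\ln(N/\epsilon)+\ln(1/\delta)\bigr)\right) \nonumber
\end{eqnarray}
suffices to ensure $\sup_{\tilde f_h\in\tilde{\cF}}|R(\tilde f_h)-\rhat(\tilde f_h)| \leq \epsilon/4$ with probability at least $1-\delta$. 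The triangle inequality lifts this to uniform convergence on $\cF$ at tolerance $\epsilon/2$, and the ERM reduction then yields $R(\erm)-R(h)\leq\epsilon$. Substituting $N=O(n^{c+1})$ and absorbing constants produces the stated sample complexity $\nu(\epsilon,\delta)=O(\tfrac{1}{\epsilon^2}(n^{c+1}\ln(n/\epsilon)+\ln(1/\delta)))$.

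The main obstacle is the Lipschitz step: if the dependence of $F_{\boldsymbol{\theta}}$ on its angles were exponentially sensitive in the depth, the log-cover would blow up and destroy the polynomial sample complexity. The crucial saving feature is that every factor inside the telescoping identity is a unitary of operator norm one, so parameter perturbations accumulate additively rather than multiplicatively, producing the linear-in-$N$ Lipschitz constant that is precisely what is needed for the cover size to come out as $n^{c+1}\ln(n/\epsilon)$ rather than something worse. Once this step is in place, the rest of the argument is the standard covering-plus-Hoeffding-plus-union-bound template and should be routine.
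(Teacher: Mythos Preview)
Your proposal is correct and follows essentially the same approach as the paper: discretize the parameter box $[0,2\pi)^N$ at mesh $\Theta(\epsilon/N)$, use the additive telescoping bound for products of unitaries together with the trace-distance/$\ell_2$ inequality to control the risk discrepancy under rounding, apply Hoeffding plus a union bound over the finite cover, and finish with the standard ERM reduction. The paper packages these steps into Proposition~\ref{funcE}, Proposition~\ref{funcL}, Proposition~\ref{near}, Lemma~\ref{finitef}, and Proposition~\ref{unifconv}, and splits the $\epsilon$-budget as $\epsilon/6+\epsilon/6+\epsilon/6$ rather than your $\epsilon/4+\epsilon/4$, but the argument is the same.
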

\begin{proof}
The essential idea for the proof relies on the discretization of $\cF$. First, we construct a finite set of hypothesis functions $\cF'$, such that  for each function  $\fh \in \cF$, we can find a function $\fh' \in \cF'$  close enough to $\fh$. Then we use Lemma \ref{finitef} in Appendix B to show that $\cF'$ is PAC learnable. Finally, for any $\fh \in \cF$ and its corresponding $\fh'$, as $\fh$ and $\fh'$ are close enough, we can prove that their risk and empirical risk are close as well. Therefore, we obtain that  $\cF$ is PAC learnable.

For clarity, we denote $l$ as the total number of $R_x(\theta)$ gates in the circuit $F$, and observe that $l \leq 12Mn^{c+1}$. We recall that $\cF$ is defined as the collection of all the functions $f_h: \cX \rightarrow \cY$ that the circuit $F$ can represent by tuning the value of the parameters $\boldsymbol{\theta} = (\theta_1,\theta_2,\cdots,\theta_l)$, where $\theta_i \in [0,2\pi)$ is the variational parameter characterizing the $i$-th x-rotation. 
Now we define a finite set $\cF' \subseteq \cF$ in this way: $\cF'$ is the collection of all the functions $f_h': \cX \rightarrow \cY$ that circuit $F$ can represent by tuning the value of all the $\theta_i$ in $\{0, e, 2e, \cdots, Ne \}$, where $e = \frac{\epsilon}{6 K n^{c+1}}$, $N = \lfloor \frac{2\pi}{e} \rfloor$,  and $K$ is a large enough constant. As there are  $l=O(n^{c+1})$ rotational gates in circuit $F$ in total, we have \begin{eqnarray}
|\cF'| = \left\lceil\frac{12 \pi K n^{c+1}}{\epsilon}\right\rceil ^ {O(n^{c+1})}, \nonumber \end{eqnarray}
which is finite. As a result, we can plug $\cF'$ and $\epsilon'=\epsilon/6$ into Lemma \ref{finitef} to obtain that when $|\cS| \geq \frac{18}{\epsilon^2}(\ln |\cF'| + \ln\frac{2}{\delta})$, 
\begin{eqnarray}
\mathbb{P}_{\cS}\left(\forall \fh^{\prime} \in \cF^{\prime}:\left|\hat{R}\left(\fh^{\prime}\right)-R\left(\fh^{\prime}\right)\right| \leq \frac{\epsilon}{6}\right) \geq 1-\delta. \label{ff1}
\end{eqnarray}

We fix all the parameters $\boldsymbol{\theta}$ in circuit $F$, and we will get an arbitrary hypothesis function $f_h \in \cF$. Then we can round all the parameter $\boldsymbol{\theta}$ of circuit $F$ into their nearest multiples of $e$ in $\{0,e,2e,\cdots, Ne\}$, and we will get a new hypothesis function $\tfh \in \cF'$. By Proposition \ref{near}, we obtain that for any $\fh \in \cF$,
\begin{eqnarray}
\left|R(\tfh) - R(\fh) \right| & \leq & \frac{\epsilon}{6}, \\  \label{ff2}
\left|\rhat(\tfh) - \rhat(\fh) \right| & \leq & \frac{\epsilon}{6}.  \label{ff3}
\end{eqnarray}
Combining the three inequalities (\ref{ff1}-\ref{ff3}) together, we arrive at
\begin{eqnarray}
\mathbb{P}_{\cS}\left(\forall \fh \in \cF:\left|\hat{R}\left(\fh\right)-R\left(\fh\right)\right| \leq \frac{\epsilon}{2}\right) \geq 1-\delta, \label{ff4}
\end{eqnarray}
when $|\cS| \geq \frac{18}{\epsilon^2}(\ln |\cF'| + \ln\frac{2}{\delta})$. To prove that $\cF$ is PAC learnable, we recall our notations that $h = \arg\min\limits_{h' \in \cF} R(h')$, and $\erm = \arg\min\limits_{h' \in \cF} \rhat(h')$. Combining the inequality (\ref{ff4}) and Proposition \ref{unifconv}, we obtain that
\begin{eqnarray}
\mathbb{P}_{\cS}\left( R(\erm) - R(h) \leq \epsilon \right) \geq 1-\delta, \nonumber
\end{eqnarray}
when $|\cS| \geq \frac{18}{\epsilon^2}(\ln |\cF'| + \ln\frac{2}{\delta})$. Plugging in  $|\cF'| = \left\lceil\frac{12 \pi K n^{c+1}}{\epsilon}\right\rceil ^ {O(n^{c+1})}$, we can prove that the hypothesis space $\cF$ is PAC learnable with sample complexity \begin{eqnarray}
\nu (\epsilon, \delta) = O(\frac{1}{\epsilon^2}(n^{c+1}\ln\frac{n}{\epsilon}+\ln\frac{1}{\delta})). \nonumber
\end{eqnarray}
This completes the proof of Theorem \ref{theoremb}.
\end{proof}

\newcommand{\cE}{\mathcal{E}}
We denote $\cE$ as the collection of all the functions $f_C:\cX\rightarrow\cY$, where $C \in \cC$. In fact, we can prove that $\cE$ is PAC learnable as well. By Theorem \ref{theorema}, our hypothesis space $\cF$ can cover all the quantum circuits in $\cC$. Thus we can obtain that $\cE \subseteq \cF$. Using the inequality (\ref{ff4}), we will arrive at
\begin{eqnarray}
\mathbb{P}_{\cS}\left(\forall \fh \in \cE:\left|\hat{R}\left(\fh\right)-R\left(\fh\right)\right| \leq \frac{\epsilon}{2}\right) \geq 1-\delta, \label{ff5}
\end{eqnarray}
when $|\cS| \geq \frac{18}{\epsilon^2}(\ln |\cF'| + \ln\frac{2}{\delta})$. Similarly with the method in Theorem $\ref{theoremb}$, we combine the inequality (\ref{ff5}) with Proposition $\ref{unifconv}$, and we can prove that $\cE$ is PAC learnable with sample complexity
$\nu (\epsilon, \delta) = O(\frac{1}{\epsilon^2}(n^{c+1}\ln\frac{n}{\epsilon}+\ln\frac{1}{\delta}))$ as well.

We stress the differences between our results and the previous works \cite{Chung2021Sample,bu2021statistical} in the literature. First, in Ref. \cite{Chung2021Sample} Chung and Lin focused on a \textit{finite} set of discretized quantum channels, and their algorithm is based on random orthogonal measurements. Whereas, in our settings we focus on a set of unitary quantum circuits with continuous variational parameters, thus the size of our concept class is \textit{infinite}. Moreover, our proof is based on a family of variational quantum neural networks with an explicit uniform structure, which would be useful in practical applications. Second, in Ref. \cite{bu2021statistical} Bu \textit{et al}. considered a more general class of quantum channels and their bounds of the sample complexity grow exponentially with the number of qubits $n$. In contrast, our focus here is variational quantum circuits and the sample complexity we obtained scales only polynomially with the system size. In other words, while Ref. \cite{bu2021statistical}'s setting is more general, the sample complexity bounds obtained in this work is exponentially tighter. Our work and Refs. \cite{Chung2021Sample,bu2021statistical} are complementary to each other.

It is also worthwhile to clarify that, although we have proved that the sample complexity for learning any physical quantum circuit is low (namely, it only scales polynomially with the number of qubits  involved), this does not mean that these circuits can be learned efficiently since the time complexity to learn an unknown circuit can still be exponentially high. In fact, it has been proved recently that training a variational quantum circuit, even for logarithmically many qubits and free fermionic systems, is NP-hard \cite{bittel2021training}. This implies that although we know for sure that our hypothesis $\cF$ can cover all physical quantum circuits and only a polynomial number of samples are needed to train a variational circuit $F\in\cF$, how to efficiently solve the optimization problem of minimizing the empirical risk remains unclear and might be an exponentially hard problem in practice.

\section{Discussion}
We mention that the family of hypothesis quantum variational circuits constructed in this paper is of independent interest due to its use of only $O(n^{c+1})$ variational parameters while maintaining notable representation power. These circuits might be used as variational ansatz for implementing quantum classifiers \cite{Schuld2020Circuitcentric,Farhi2018Classification,Havlicek2019Supervised,Zhu2019Training,Cong2019Quantum,Grant2018Hierarchical,Lu2020Quantum}, variational quantum eigensolvers \cite{Peruzzo2014Avariational,Kokail2019Self,Liu2019Variational,Wang2019Accelerated}, or quantum generative adversarial networks \cite{Lloyd2018Quantum,Demers2018Quantum,hu2019quantum}, etc. On the other hand, we also remark that similarly to many other variational quantum circuits constructed in the literature,  this family of variational circuits may suffer from the barren plateau (i.e., vanishing gradient) problem \cite{Mcclean2018Barren,Cerezo2020Cost} as well. In addition, our work can be appealing as the family of circuits is constructed without optimizing the structure and the number of parameters. In the future, it would also be interesting to explore other alternative structures with smaller depths and fewer parameters. Another interesting problem worth further investigation is to consider a scenario where we do not have perfect  knowledge about the training data, namely that the training dataset may not be fully labelled. How to extend our results to this scenario remains unknown.

We note that in our proof, the use of PAC learning theory is in fact independent from the learning model, i.e., it can deal with both the classical and quantum objectives. In our settings, the objects to be learned are parametric quantum circuits, but we can still use standard classical techniques of PAC learning theory (like discretization) to obtain the sample complexity bound.

In summary, we have proved that  unitary physical quantum circuits are PAC learnable on a quantum computer via empirical risk minimization. In particular, we proved that to learn a unitary quantum circuit with at most $n^c$ local gates, the sample complexity is bounded by $\tilde{O}(n^{c+1})$. Our results are generally applicable to all unitary quantum circuits of practical interest. There are many notable quantum circuits (algorithms or kernels, such as Shor's factorization algorithm \cite{shor1994algorithms}, the HHL algorithm \cite{Harrow2009Quantum}, quantum support vector machine \cite{Rebentrost2014Quantum}, quantum classification based on discrete logarithm  \cite{liu2021rigorous}, etc.) that hold the intriguing potential of exponential quantum speedup. Our results imply that a polynomial number of samples are enough to learn these quantum circuits. In Ref. \cite{bang2014strategy}, Bang \textit{et al}. proposed a method for learning quantum algorithms assisted by machine learning, which shows learning speedup in designing quantum circuits for solving the Deutsch–Jozsa problem, and our results imply that the quantum circuits they used are PAC learnable as well. 

\section{Acknowledgments}

We thank Wenjie Jiang, Peixin Shen, and Xun Gao in particular for their helpful discussions. This work is supported by the start-up fund from Tsinghua University (Grant. No. 53330300320), the National Natural Science Foundation of China (Grant. No. 12075128), and the Shanghai Qi Zhi Institute.

\section*{References}
\bibliographystyle{iopart-num}
\bibliography{DengQAIGroup}

\providecommand{\newblock}{}
\begin{thebibliography}{10}
\expandafter\ifx\csname url\endcsname\relax
  \def\url#1{{\tt #1}}\fi
\expandafter\ifx\csname urlprefix\endcsname\relax\def\urlprefix{URL }\fi
\providecommand{\eprint}[2][]{\url{#2}}

\bibitem{Jordan2015Machine}
Jordan M and Mitchell T 2015 {\em Science\/} {\bf 349} 255--260
  \urlprefix\url{http://science.sciencemag.org/content/349/6245/255}

\bibitem{lecun2015deep}
LeCun Y, Bengio Y and Hinton G 2015 {\em Nature\/} {\bf 521} 436--444
  \urlprefix\url{https://doi.org/10.1038/nature14539}

\bibitem{silver2016mastering}
Silver D, Huang A, Maddison C~J, Guez A, Sifre L, {van den Driessche} G,
  Schrittwieser J, Antonoglou I, Panneershelvam V, Lanctot M {\em et~al.\/}
  2016 {\em Nature\/} {\bf 529} 484--489 ISSN 1476-4687
  \urlprefix\url{https://www.nature.com/articles/nature16961}

\bibitem{Silver2017Mastering}
Silver D, Schrittwieser J, Simonyan K, Antonoglou I, Huang A, Guez A, Hubert T,
  Baker L, Lai M, Bolton A {\em et~al.\/} 2017 {\em Nature\/} {\bf 550}
  354--359 ISSN 1476-4687
  \urlprefix\url{https://www.nature.com/articles/nature24270}

\bibitem{senior2020improved}
Senior A~W, Evans R, Jumper J, Kirkpatrick J, Sifre L, Green T, Qin C,
  {\v{Z}}{\'\i}dek A, Nelson A~W, Bridgland A {\em et~al.\/} 2020 {\em
  Nature\/} {\bf 577} 706--710
  \urlprefix\url{https://doi.org/10.1038/s41586-019-1923-7}

\bibitem{krizhevsky2012imagenet}
Krizhevsky A, Sutskever I and Hinton G~E 2017 {\em Commun. ACM\/} {\bf 60}
  84--90 ISSN 0001-0782 \urlprefix\url{https://doi.org/10.1145/3065386}

\bibitem{Sarma2019Machine}
Sarma S~D, Deng D~L and Duan L~M 2019 {\em Phys. Today\/} {\bf 72} 48
  \urlprefix\url{https://doi.org/10.1063/PT.3.4164}

\bibitem{Biamonte2017Quantum}
Biamonte J, Wittek P, Pancotti N, Rebentrost P, Wiebe N and Lloyd S 2017 {\em
  Nature\/} {\bf 549} 195
  \urlprefix\url{https://www.nature.com/articles/nature23474}

\bibitem{Ciliberto2017Quantum}
Ciliberto C, Herbster M, Davide~Ialongo A, Pontil M, Rocchetto A, Severini S
  and Wossnig L 2017 {\em Proc.R.Soc.A\/} {\bf 474} 20170551
  \urlprefix\url{http://rspa.royalsocietypublishing.org/content/474/2209/20170551}

\bibitem{Dunjko2018Machine}
Dunjko V and Briegel H~J 2018 {\em Rep. Prog. Phys.\/} {\bf 81} 074001 ISSN
  0034-4885
  \urlprefix\url{https://iopscience.iop.org/article/10.1088/1361-6633/aab406/meta}

\bibitem{Carleo2019Machine}
Carleo G, Cirac I, Cranmer K, Daudet L, Schuld M, Tishby N, {Vogt-Maranto} L
  and Zdeborov{\'a} L 2019 {\em Rev. Mod. Phys.\/} {\bf 91} 045002
  \urlprefix\url{https://link.aps.org/doi/10.1103/RevModPhys.91.045002}

\bibitem{Harrow2009Quantum}
Harrow A~W, Hassidim A and Lloyd S 2009 {\em Phys. Rev. Lett.\/} {\bf 103}
  150502
  \urlprefix\url{https://link.aps.org/doi/10.1103/PhysRevLett.103.150502}

\bibitem{gao2018quantum}
Gao X, Zhang Z~Y and Duan L~M 2018 {\em Sci. Adv.\/} {\bf 4} eaat9004
  \urlprefix\url{https://advances.sciencemag.org/lens/advances/4/12/eaat9004}

\bibitem{Rebentrost2014Quantum}
Rebentrost P, Mohseni M and Lloyd S 2014 {\em Phys. Rev. Lett.\/} {\bf 113}
  130503
  \urlprefix\url{https://link.aps.org/doi/10.1103/PhysRevLett.113.130503}

\bibitem{Aaronson2015Read}
Aaronson S 2015 {\em Nat. Phys.\/} {\bf 11} 291--293 ISSN 1745-2481
  \urlprefix\url{https://www.nature.com/articles/nphys3272}

\bibitem{valiant1984theory}
Valiant L~G 1984 {\em Commun. ACM\/} {\bf 27} 1134--1142 ISSN 0001-0782
  \urlprefix\url{https://doi.org/10.1145/1968.1972}

\bibitem{Haussler1990Probably}
Haussler D 1990 {\em Probably approximately correct learning\/} (University of
  California, Santa Cruz, Computer Research Laboratory)
  \urlprefix\url{https://www.aaai.org/Papers/AAAI/1990/AAAI90-163.pdf}

\bibitem{Shalev-Shwartz2014Understanding}
Shalev-Shwartz S and Ben-David S 2014 {\em Understanding Machine Learning: From
  Theory to Algorithms\/} (Cambridge University Press) ISBN 9781107298019

\bibitem{Arute2019Quantum}
Arute F, Arya K, Babbush R, Bacon D, Bardin J~C, Barends R, Biswas R, Boixo S,
  Brandao F~G, Buell D~A {\em et~al.\/} 2019 {\em Nature\/} {\bf 574} 505--510
  \urlprefix\url{https://www.nature.com/articles/s41586%20019%201666%205}

\bibitem{zhong2020quantum}
Zhong H~S, Wang H, Deng Y~H, Chen M~C, Peng L~C, Luo Y~H, Qin J, Wu D, Ding X,
  Hu Y {\em et~al.\/} 2020 {\em Science\/} {\bf 370} 1460--1463
  \urlprefix\url{https://science.sciencemag.org/content/370/6523/1460}

\bibitem{wu2021strong}
Wu Y, Bao W~S, Cao S, Chen F, Chen M~C, Chen X, Chung T~H, Deng H, Du Y, Fan D
  {\em et~al.\/} 2021 Strong quantum computational advantage using a
  superconducting quantum processor (\textit{Preprint} \eprint{2106.14734})

\bibitem{Schuld2017Implementing}
Schuld M, Fingerhuth M and Petruccione F 2017 {\em EPL Europhys. Lett.\/} {\bf
  119} 60002 \urlprefix\url{https://doi.org/10.1209%2F0295-5075%2F119%2F60002}

\bibitem{Schuld2020Circuitcentric}
Schuld M, Bocharov A, Svore K~M and Wiebe N 2020 {\em Phys. Rev. A\/} {\bf 101}
  032308 \urlprefix\url{https://link.aps.org/doi/10.1103/PhysRevA.101.032308}

\bibitem{Beer2020Training}
Beer K, Bondarenko D, Farrelly T, Osborne T~J, Salzmann R, Scheiermann D and
  Wolf R 2020 {\em Nat. Commun.\/} {\bf 11} 808 ISSN 2041-1723
  \urlprefix\url{https://www.nature.com/articles/s41467-020-14454-2}

\bibitem{Cong2019Quantum}
Cong I, Choi S and Lukin M~D 2019 {\em Nat. Phys.\/} {\bf 15} 1273--1278 ISSN
  1745-2481 \urlprefix\url{http://www.nature.com/articles/s41567-019-0648-8}

\bibitem{Watts2019Exponential}
Watts A~B, Kothari R, Schaeffer L and Tal A 2019 Exponential separation between
  shallow quantum circuits and unbounded fan-in shallow classical circuits {\em
  Proceedings of the 51st Annual ACM SIGACT Symposium on Theory of Computing\/}
  STOC 2019 (New York, NY, USA: {ACM}) pp 515--526 ISBN 9781450367059
  \urlprefix\url{https://doi.org/10.1145/3313276.3316404}

\bibitem{Arunachalam2017Optimal}
Arunachalam S and de~Wolf R 2017 Optimal quantum sample complexity of learning
  algorithms {\em Proceedings of the 32nd Computational Complexity
  Conference\/} CCC '17 (Dagstuhl, DEU: Schloss Dagstuhl--Leibniz-Zentrum fuer
  Informatik) ISBN 9783959770408
  \urlprefix\url{https://dl.acm.org/doi/abs/10.5555/3135595.3135620}

\bibitem{Chung2021Sample}
Chung K~M and Lin H~H 2021 {Sample Efficient Algorithms for Learning Quantum
  Channels in PAC Model and the Approximate State Discrimination Problem} {\em
  16th Conference on the Theory of Quantum Computation, Communication and
  Cryptography (TQC 2021)\/} ({\em Leibniz International Proceedings in
  Informatics (LIPIcs)\/} vol 197) ed Hsieh M~H (Dagstuhl, Germany: Schloss
  Dagstuhl -- Leibniz-Zentrum f{\"u}r Informatik) pp 3:1--3:22 ISBN
  978-3-95977-198-6 ISSN 1868-8969
  \urlprefix\url{https://drops.dagstuhl.de/opus/volltexte/2021/13998}

\bibitem{Arunachalam2017Survey}
Arunachalam S and de~Wolf R 2017 {\em SIGACT News\/} {\bf 48} 41--67 ISSN
  0163-5700 \urlprefix\url{https://doi.org/10.1145/3106700.3106710}

\bibitem{Arunachalam2020Quantum}
Arunachalam S, Grilo A~B and Yuen H 2020 Quantum statistical query learning
  (\textit{Preprint} \eprint{2002.08240})

\bibitem{Heidari2021Theoretical}
Heidari M, Padakandla A and Szpankowski W 2021 A theoretical framework for
  learning from quantum data (\textit{Preprint} \eprint{2107.06406})

\bibitem{Sweke2021Quantum}
Sweke R, Seifert J~P, Hangleiter D and Eisert J 2021 {\em Quantum\/} {\bf 5}
  417 ISSN 2521-327X
  \urlprefix\url{http://dx.doi.org/10.22331/q-2021-03-23-417}

\bibitem{bu2021statistical}
Bu K, Koh D~E, Li L, Luo Q and Zhang Y 2021 On the statistical complexity of
  quantum circuits (\textit{Preprint} \eprint{2101.06154})

\bibitem{caro2020pseudo}
Caro M~C and Datta I 2020 {\em Quantum Machine Intelligence\/} {\bf 2} 1--14
  \urlprefix\url{https://doi.org/10.1007/s42484-020-00027-5}

\bibitem{du2021efficient}
Du Y, Tu Z, Yuan X and Tao D 2021 An efficient measure for the expressivity of
  variational quantum algorithms (\textit{Preprint} \eprint{2104.09961})

\bibitem{caro2021encoding}
Caro M~C, Gil-Fuster E, Meyer J~J, Eisert J and Sweke R 2021 Encoding-dependent
  generalization bounds for parametrized quantum circuits (\textit{Preprint}
  \eprint{2106.03880})

\bibitem{Cheng2015Learnability}
Cheng H~C, Hsieh M~H and Yeh P~C 2015 {\em arXiv preprint arXiv:1501.00559\/}

\bibitem{Chitambar2019Quantum}
Chitambar E and Gour G 2019 {\em Rev. Mod. Phys.\/} {\bf 91}(2) 025001
  \urlprefix\url{https://link.aps.org/doi/10.1103/RevModPhys.91.025001}

\bibitem{shor1994algorithms}
Shor P 1994 Algorithms for quantum computation: discrete logarithms and
  factoring {\em Proceedings 35th Annual Symposium on Foundations of Computer
  Science\/} pp 124--134

\bibitem{nielsen2002quantum}
Nielsen M~A and Chuang I~L 2010 {\em Quantum Computation and Quantum
  Information: 10th Anniversary Edition\/} (Cambridge University Press) ISBN
  9780511976667

\bibitem{coppersmith2002approximate}
Coppersmith D 2002 An approximate fourier transform useful in quantum factoring
  (\textit{Preprint} \eprint{quant-ph/0201067})

\bibitem{wolf2018mathematical}
Wolf M~M 2020 Mathematical foundations of supervised learning
  \urlprefix\url{https://www-m5.ma.tum.de/foswiki/pub/M5/Allgemeines/MA4801_2020S/ML_notes_main.pdf}

\bibitem{bittel2021training}
Bittel L and Kliesch M 2021 Training variational quantum algorithms is
  np-hard--even for logarithmically many qubits and free fermionic systems
  (\textit{Preprint} \eprint{2101.07267})

\bibitem{Farhi2018Classification}
Farhi E and Neven H 2018 Classification with quantum neural networks on near
  term processors (\textit{Preprint} \eprint{1802.06002})

\bibitem{Havlicek2019Supervised}
Havl{\'i}{\v c}ek V, C{\'o}rcoles A~D, Temme K, Harrow A~W, Kandala A, Chow J~M
  and Gambetta J~M 2019 {\em Nature\/} {\bf 567} 209

\bibitem{Zhu2019Training}
Zhu D, Linke N~M, Benedetti M, Landsman K~A, Nguyen N~H, Alderete C~H,
  {Perdomo-Ortiz} A, Korda N, Garfoot A, Brecque C {\em et~al.\/} 2019 {\em
  Sci. Adv.\/} {\bf 5} eaaw9918
  \urlprefix\url{https://advances.sciencemag.org/content/5/10/eaaw9918}

\bibitem{Grant2018Hierarchical}
Grant E, Benedetti M, Cao S, Hallam A, Lockhart J, Stojevic V, Green A~G and
  Severini S 2018 {\em npj Quantum Inf.\/} {\bf 4} 65
  \urlprefix\url{https://www.nature.com/articles/s41534-018-0116-9}

\bibitem{Lu2020Quantum}
Lu S, Duan L~M and Deng D~L 2020 {\em Phys Rev Res\/} {\bf 2} 033212
  \urlprefix\url{https://journals.aps.org/prresearch/abstract/10.1103/PhysRevResearch.2.033212}

\bibitem{Peruzzo2014Avariational}
Peruzzo A, McClean J, Shadbolt P, Yung M~H, Zhou X~Q, Love P~J, Aspuru-Guzik A
  and O'brien J~L 2014 {\em Nat Commu\/} {\bf 5} 4213
  \urlprefix\url{https://www.nature.com/articles/ncomms5213?origin=ppub}

\bibitem{Kokail2019Self}
Kokail C, Maier C, van Bijnen R, Brydges T, Joshi M, Jurcevic P, Muschik C,
  Silvi P, Blatt R, Roos C {\em et~al.\/} 2019 {\em Nature\/} {\bf 569} 355
  \urlprefix\url{https://doi.org/10.1038/s41586-019-1177-4}

\bibitem{Liu2019Variational}
Liu J~G, Zhang Y~H, Wan Y and Wang L 2019 {\em Phys. Rev. Research\/} {\bf
  1}(2) 023025
  \urlprefix\url{https://link.aps.org/doi/10.1103/PhysRevResearch.1.023025}

\bibitem{Wang2019Accelerated}
Wang D, Higgott O and Brierley S 2019 {\em Phys. Rev. Lett.\/} {\bf 122}(14)
  140504
  \urlprefix\url{https://link.aps.org/doi/10.1103/PhysRevLett.122.140504}

\bibitem{Lloyd2018Quantum}
Lloyd S and Weedbrook C 2018 {\em Phys. Rev. Lett.\/} {\bf 121} 040502
  \urlprefix\url{https://link.aps.org/doi/10.1103/PhysRevLett.121.040502}

\bibitem{Demers2018Quantum}
Dallaire-Demers P~L and Killoran N 2018 {\em Phys. Rev. A\/} {\bf 98}(1) 012324
  \urlprefix\url{https://link.aps.org/doi/10.1103/PhysRevA.98.012324}

\bibitem{hu2019quantum}
Hu L, Wu S~H, Cai W, Ma Y, Mu X, Xu Y, Wang H, Song Y, Deng D~L, Zou C~L {\em
  et~al.\/} 2019 {\em Sci. Adv.\/} {\bf 5} eaav2761
  \urlprefix\url{https://advances.sciencemag.org/content/5/1/eaav2761.abstract}

\bibitem{Mcclean2018Barren}
McClean J~R, Boixo S, Smelyanskiy V~N, Babbush R and Neven H 2018 {\em Nat.
  Commun.\/} {\bf 9} 1--6
  \urlprefix\url{https://doi.org/10.1038/s41467-018-07090-4}

\bibitem{Cerezo2020Cost}
Cerezo M, Sone A, Volkoff T, Cincio L and Coles P~J 2021 {\em Nat. Commun.\/}
  {\bf 12} 1791 \urlprefix\url{https://doi.org/10.1038/s41467-021-21728-w}

\bibitem{liu2021rigorous}
Liu Y, Arunachalam S and Temme K 2021 {\em Nature Physics\/} {\bf 17}(9)
  1013--1017 \urlprefix\url{https://doi.org/10.1038/s41567-021-01287-z}

\bibitem{bang2014strategy}
Bang J, Ryu J, Yoo S, Paw{\l}owski M and Lee J 2014 {\em New Journal of
  Physics\/} {\bf 16} 073017
  \urlprefix\url{https://iopscience.iop.org/article/10.1088/1367-2630/16/7/073017}

\end{thebibliography}

\newpage
\appendix

\section{The universality of \texorpdfstring{$\mathcal{F}$}{}}

In this paper, all the constants such as $b$, $c$, $K$, and $M$ are independent of $n$, $\epsilon$, and $\delta$. Also, we recall that $\cC$ is the set of all the $n$-qubit quantum circuits with at most $n^c$ unitary gates, with each gate acting on at most $b$ qubits.

In proving Theorem \ref{theorema} in the main text, we used three lemmas, which are appended in the following. The Lemma \ref{universal} is proved in Ref. \cite{nielsen2002quantum}, which we recap here for completeness. The Proposition \ref{expressC} and Proposition \ref{blockrep} are proved in this paper.

\newcommand{\universal}{
An arbitrary unitary operation on $b$ qubits can be implemented using a circuit containing at most $c_0 b^2 4^b$ single-qubit unitary gates and $\CNOT$ gates, where $c_0$ is a constant.
}
\begin{lemma}[\cite{nielsen2002quantum}, Section 4.5.2]
\label{universal}
\universal
\end{lemma}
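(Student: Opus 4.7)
The plan is to follow the classical synthesis argument from Nielsen and Chuang: decompose an arbitrary $2^b \times 2^b$ unitary into a product of two-level unitaries, then implement each two-level unitary using single-qubit gates and $\CNOT$s by means of a Gray-code walk.

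First I would establish the two-level decomposition. Given an arbitrary unitary $U$ acting on $b$ qubits, a Gaussian-elimination-style procedure produces unitaries $U_1, U_2, \ldots, U_k$ each of which acts non-trivially only on a two-dimensional subspace spanned by two computational basis states, such that $U = U_k U_{k-1} \cdots U_1$. The count proceeds column by column: eliminating the off-diagonal entries in the first column costs at most $2^b - 1$ two-level factors, the second column costs at most $2^b - 2$, and so on, yielding at most $\binom{2^b}{2} = \tfrac{1}{2} 2^b(2^b - 1) = O(4^b)$ two-level unitaries in total.

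Second, I would show that each two-level unitary can be implemented with $O(b^2)$ single-qubit and $\CNOT$ gates. Suppose $\tilde U$ acts non-trivially on the span of $|s\rangle$ and $|t\rangle$, where $s, t \in \{0,1\}^b$. I would pick a Gray code $s = g_0, g_1, \ldots, g_{r} = t$ of length $r \le b$, in which consecutive strings differ in exactly one bit. By applying a sequence of multiply-controlled $\mathrm{NOT}$ gates, I can swap the amplitude at $|g_0\rangle$ with that at $|g_{r-1}\rangle$ while preserving all other basis states, then apply a single $(b-1)$-controlled single-qubit gate on the remaining differing bit to realize the desired action on the $\{|g_{r-1}\rangle, |t\rangle\}$ block, and finally undo the Gray-code permutation. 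Each multiply-controlled $\mathrm{NOT}$ on $b$ qubits (and the single controlled-$V$) decomposes into $O(b)$ elementary single-qubit gates and $\CNOT$s via the standard Toffoli-cascade construction; since the Gray code has at most $b$ steps and each step contributes $O(b)$ elementary gates, the total is $O(b^2)$ gates per two-level unitary.

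Multiplying the two counts gives a bound of $O(b^2 \cdot 4^b)$ elementary gates, which is exactly the desired $c_0 b^2 4^b$ for a sufficiently large absolute constant $c_0$. The main technical obstacle, and the step I would write out most carefully, is the multiply-controlled NOT decomposition underlying the Gray-code step, since the constant hidden in $O(b)$ there (via the standard ancilla-free Toffoli networks, cf.\ Nielsen and Chuang Sec.~4.3) must be stated precisely to justify absorbing everything into one universal constant $c_0$; the two-level decomposition and the Gray-code bookkeeping are essentially counting arguments once the elementary gate count for a multiply-controlled NOT is in place.
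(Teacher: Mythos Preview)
The paper does not give its own proof of this lemma; it explicitly defers to Ref.~\cite{nielsen2002quantum}, Section~4.5.2, and merely states the result. Your proposal is a faithful summary of exactly that Nielsen--Chuang argument (two-level decomposition followed by Gray-code implementation with multiply-controlled gates), so there is nothing to compare: you are reproducing the cited proof, which is all the paper asks for.
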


\newcommand{\expressC}{
\bk{
For any $C \in \cC$, there exist $l = O(n^c)$ unitary gates $U_1,U_2,\cdots, U_l$, such that $C = U_l U_{l-1} \cdots U_1$, and each gate $U_i$ is either a single-qubit unitary gate $V_j$ acting on qubit $j$, or $\CNOT_{1j}$ gate with the first qubit being the control qubit and the $j$-th qubit being the target one.}
}
\begin{proposition}
\label{expressC}
\expressC
\end{proposition}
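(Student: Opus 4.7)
The plan is to reduce $C$ to the desired normal form in two stages. \textbf{Stage 1 (reduction to elementary gates).} By the definition of $\cC$, write $C = G_k G_{k-1} \cdots G_1$ with $k \leq n^c$, where each $G_t$ acts on at most $b$ qubits. Apply Lemma \ref{universal} to each $G_t$ to decompose it into at most $c_0 b^2 4^b = O(1)$ single-qubit unitary gates and $\CNOT$ gates; note that at this point the $\CNOT$s may act between arbitrary pairs of qubits, not necessarily involving qubit $1$. Concatenating these decompositions yields $C = W_L W_{L-1} \cdots W_1$ with $L = O(n^c)$, where each $W_t$ is either a single-qubit unitary or some $\CNOT_{ij}$.

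\textbf{Stage 2 (routing every CNOT through qubit $1$).} Rewrite each general $\CNOT_{ij}$ using only $\CNOT_{1k}$ gates and single-qubit gates, via the following standard identities. First, the Hadamard identity $\CNOT_{i1} = (H_1 \otimes H_i)\,\CNOT_{1i}\,(H_1 \otimes H_i)$ swaps the roles of control and target, so every $\CNOT_{i1}$ is expressible with a single $\CNOT_{1i}$ and four Hadamards. Second, for $i,j \neq 1$, I conjugate by a swap: $\CNOT_{ij} = S_{1i}\,\CNOT_{1j}\,S_{1i}$, together with the decomposition $S_{1i} = \CNOT_{1i}\CNOT_{i1}\CNOT_{1i}$ and the previous identity to eliminate the $\CNOT_{i1}$ factor. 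Combining these, every $\CNOT_{ij}$ can be written as a product of $O(1)$ gates, each of which is either a single-qubit unitary or a $\CNOT_{1k}$ for some $k$.

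\textbf{Finishing up.} Applying the Stage 2 rewriting to each of the $L = O(n^c)$ gates produced in Stage 1 inflates the total count by only a constant factor, so the resulting decomposition $C = U_l U_{l-1} \cdots U_1$ has $l = O(n^c)$ and each $U_i$ is of the allowed form. The only technical points are verifying the two Hadamard/SWAP identities and keeping track of the (constant) blow-up factors; neither poses a real obstacle, so I expect no substantive difficulty—the proposition is essentially a bookkeeping consequence of the universal decomposition Lemma \ref{universal} combined with the observation that the restriction ``control on qubit $1$'' costs only a constant overhead.
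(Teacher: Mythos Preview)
Your proposal is correct and follows essentially the same two-stage approach as the paper: first invoke Lemma~\ref{universal} to reduce $C$ to $O(n^c)$ single-qubit gates and arbitrary $\CNOT$s, then route every $\CNOT_{ij}$ through qubit~$1$ at constant overhead. The only cosmetic difference is in the routing identity for $i,j>1$: the paper uses the direct relation $\CNOT_{ij}=(\CNOT_{1j}\,\CNOT_{i1})^2$, whereas you conjugate by a swap $S_{1i}$ and then decompose the swap; both are standard, valid, and cost $O(1)$ gates.
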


\begin{proof}
 We first prove that $C$ can be decomposed as $O(n^c)$ elementary gates, including CNOT gates and single-qubit unitary gates. By Lemma \ref{universal}, $C$ can be implemented by at most $c_0b^24^bn^c = O(n^c)$ unitary gates, and each gate is either a single-qubit unitary gate or  $\CNOT_{ij}$ gate with the control qubit $i$ and the target qubit $j$.

To prove that $C$ can be decomposed as the product of $l = O(n^c)$ single-qubit unitary gates and $\CNOT_{1j}$ gates, we need only prove that when $i\neq 1$ and $i\neq j$, $\CNOT_{ij}$ can be decomposed as $\CNOT_{1i}$, $\CNOT_{1j}$,  and $H$ gates.

When $i>1$, $j=1$, we can write $\CNOT_{i1}$ in this way:
\begin{eqnarray}
\CNOT_{i1} = (H_1 \otimes H_i) \CNOT_{1i} (H_1 \otimes H_i). \nonumber
\end{eqnarray}

Meanwhile, when $i,j > 1$ and $ i \neq j$,  we can decompose $\CNOT_{ij}$ into $\CNOT_{1j}$ and $\CNOT_{i1}$ in this way:
\begin{eqnarray}
\CNOT_{i j} = ({\CNOT}_{1 j}\CNOT_{i 1})^2, \nonumber
\end{eqnarray}
and we have shown that $\CNOT_{i1}$ can be decomposed as $\CNOT_{1i}$ and $H$ gates.

As each decomposition uses only $O(1)$ gates, we can obtain that $C$ can be decomposed as the product of $l = O(n^c)$  single-qubit unitary gates and $\CNOT_{1j}$ gates, and the proof is completed.
\end{proof}

In a level-$2$ block $B_j$, there are two level-$1$ blocks on qubit $1$ and qubit $j$, respectively. Each level-$1$ block has three parameters $\beta, \gamma, \delta$, and by $Z$-$X$ decomposition \cite{nielsen2002quantum}, we can tune these three parameters to enable a level-$1$ block $L_j$ on qubit $j$ to represent any single-qubit unitary gate acting on qubit $j$ up to an irrelevant global phase. Also, by setting the three parameters to zero, a level-$1$ block $L_j$ can also represent the identity gate. We will prove that by tuning the parameters of the level-$2$ blocks, one layer consisting of $\layer$ can represent any single-qubit unitary gate acting on any qubit $j$, and  $\CNOT_{1j}$ can be represented by two layers.

\newcommand{\blockrep}{
\bk{
1. One layer $\layer$ can represent any single-qubit unitary gate $V_j$ acting on any qubit $j$ up to an irrelevant global phase.}

\bk{
2. Two layers $(\layer)^2$ can represent $\CNOT_{1j}$ up to an irrelevant global phase.}
}
\begin{proposition}
\label{blockrep}
\blockrep
\end{proposition}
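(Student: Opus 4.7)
The plan is to prove both parts constructively, tuning the level-1 block parameters so that the required gates appear exactly. I first note the algebraic identity $H R_x(\theta) H = R_z(\theta)$ (together with $H^2 = I$), which rewrites the level-1 block as $L = H R_x(\beta) H R_x(\gamma) H R_x(\delta) H = R_z(\beta) R_x(\gamma) R_z(\delta)$; by the $Z$-$X$-$Z$ Euler decomposition recalled in the main text, $L$ therefore realizes every element of $SU(2)$ up to a global phase, and reduces to $I$ exactly when $\beta = \gamma = \delta = 0$.

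For part 1, to implement $V_j$ on qubit $j \geq 2$ I set the two inner level-1 blocks of $B_j$ (the pair sandwiched between the two CNOTs) to the identity so that the two $\CNOT_{1j}$ gates cancel, leaving $B_j$ equal to the outer tensor product $L_1^{(b)} \otimes L_j^{(b)}$; choosing $L_1^{(b)} = I$ and $L_j^{(b)} = V_j$ realizes $B_j = I \otimes V_j$. For every other $B_k$ with $k \neq j$, setting all four level-1 blocks to the identity forces $B_k = \CNOT_{1k}^2 = I$ on the full register, so the entire layer acts as $V_j$ on qubit $j$ and trivially elsewhere. The edge case $V_1$ on qubit $1$ is handled symmetrically via $B_2$: collapse its inner blocks to identity, take $L_2^{(b)} = I$ and $L_1^{(b)} = V_1$, and leave every other $B_k$ trivial.

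For part 2, I exhibit an explicit two-CNOT decomposition of $\CNOT_{1j}$ and split it across the two layers. Combining $\CNOT_{1j} = (I \otimes H)\, CZ_{1j}\, (I \otimes H)$ with the standard identity $\CNOT_{1j}(I \otimes R_z(\theta))\CNOT_{1j} = e^{-i\theta Z_1 Z_j/2}$ yields
\[
\CNOT_{1j} \;=\; e^{-i\pi/4}\,(R_z(-\pi/2) \otimes H R_z(-\pi/2)) \cdot \CNOT_{1j} \cdot (I \otimes R_z(\pi/2)) \cdot \CNOT_{1j} \cdot (I \otimes H),
\]
which has the form $L_{\mathrm{out}} \cdot \CNOT \cdot L_{\mathrm{mid}} \cdot \CNOT \cdot L_{\mathrm{in}}$ with tensor-product single-qubit factors. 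To realize this in two layers I set every $B_k$ with $k \neq j$ to the identity in both layers; in $B_j^{(1)}$ I set the inner level-1 blocks to identity so that $B_j^{(1)}$ reduces to its outer tensor product, and tune that to $L_{\mathrm{in}} = I \otimes H$; in $B_j^{(2)}$ I keep both CNOTs active and tune the inner pair to $L_{\mathrm{mid}} = I \otimes R_z(\pi/2)$ and the outer pair to $L_{\mathrm{out}} = R_z(-\pi/2) \otimes H R_z(-\pi/2)$. The product of the two layers then equals $L_{\mathrm{out}} \cdot \CNOT_{1j} \cdot L_{\mathrm{mid}} \cdot \CNOT_{1j} \cdot L_{\mathrm{in}} = e^{i\pi/4}\,\CNOT_{1j}$, which is $\CNOT_{1j}$ up to the irrelevant global phase. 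The main obstacle is the bookkeeping in verifying the explicit two-CNOT decomposition of $\CNOT_{1j}$; once that identity is in hand, the parameter matching onto the hypothesis circuit is immediate.
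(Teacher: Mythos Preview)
Your argument is essentially the paper's: collapse all $B_k$ with $k\neq j$ to the identity, use the $Z$-$X$-$Z$ Euler form of the level-1 block for part~1, and for part~2 exhibit an explicit two-CNOT decomposition of $\CNOT_{1j}$ and spread it over two consecutive $B_j$ blocks. Your concrete decomposition
\[
\CNOT_{1j}=e^{-i\pi/4}\bigl(R_z(-\tfrac{\pi}{2})\otimes H R_z(-\tfrac{\pi}{2})\bigr)\,\CNOT_{1j}\,(I\otimes R_z(\tfrac{\pi}{2}))\,\CNOT_{1j}\,(I\otimes H)
\]
is correct and serves the same purpose as the paper's $(W_4\otimes W_3)\,\CNOT\,(I\otimes W_2)\,\CNOT\,(I\otimes W_1)$ with $W_1=R_z(\tfrac{\pi}{2})$, $W_2=R_y(\tfrac{\pi}{2})$, $W_3=R_z(-\tfrac{\pi}{2})R_y(-\tfrac{\pi}{2})$, $W_4=R_z(-\tfrac{\pi}{2})$.

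There is one bookkeeping slip in how you load this decomposition into the two layers. From the paper's construction (visible in the circuit diagrams of the proof), each level-2 block has the form $B_j=\CNOT_{1j}\cdot(L^{\text{inner}})\cdot\CNOT_{1j}\cdot(L^{\text{outer}})$: the ``outer'' pair of level-1 blocks is applied \emph{first}, and the block \emph{ends} with a bare $\CNOT_{1j}$. With your assignment (first layer collapsed to $L_{\mathrm{in}}$, second layer carrying $L_{\mathrm{mid}}$ and $L_{\mathrm{out}}$) one gets
\[
B_j^{(2)}B_j^{(1)}=\CNOT_{1j}\,L_{\mathrm{mid}}\,\CNOT_{1j}\,L_{\mathrm{out}}\,L_{\mathrm{in}},
\]
which is not $L_{\mathrm{out}}\,\CNOT_{1j}\,L_{\mathrm{mid}}\,\CNOT_{1j}\,L_{\mathrm{in}}$. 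The fix is simply to swap the roles of the two layers, exactly as the paper does: put the active CNOTs in the \emph{first} $B_j$ (outer pair $=L_{\mathrm{in}}$, inner pair $=L_{\mathrm{mid}}$) and collapse the \emph{second} $B_j$ to the tensor product $L_{\mathrm{out}}$. With that correction your proof goes through verbatim.
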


\begin{proof}
To prove this lemma, we will set most of the level-$2$ blocks in the layers to be the identity gates and use at most two blocks to represent the gates we need. 

We prove part one first. We separate the claim into two cases, $j = 1$ and $j \neq 1$.
When $j = 1$, we can let $B_{n}B_{n-1}\cdots B_3$ represent the identity gate by tuning all their parameters to zero. For clarity, we denote $L_j$ as a level-1 block acting on the $j$-th qubit. Given any unitary gate $V_1$ on qubit $1$, a level-$1$ block $L_1$ can represent $V_1$, and both level-$1$ blocks $L_1$ and $L_2$ can represent the identity gate. As a level-$2$ block $B_2$ consists of four level-$1$ blocks and two CNOT gates, we can tune the parameters of the four level-$1$ blocks in the following way so that $B_2$ can represent $V_1$:

\begin{figure}[H]
\centerline{
\Qcircuit @C=1em @R=.7em{
\lstick{}&\gate{V_1}&\ctrl{1}&\gate{I_1}&\ctrl{1}&\qw\\
\lstick{}&\gate{I_2}&\targ   &\gate{I_2}&\targ   &\qw}
}
\end{figure}

Similarly, when $j \neq 1$, we can let $B_nB_{n-1}\cdots B_{j+1}$ and $B_{j-1}\cdots B_3 B_2$ represent the identity gate. Then we need only let $B_j$ represent unitary gate $V_j$.   Given any unitary gate $V_j$ on qubit $j$, we can tune the parameters of the four level-$1$ blocks in the following way so that $B_j$ can represent $V_j$:
\begin{figure}[H]
\centerline{
\Qcircuit @C=1em @R=.7em{
\lstick{}&\gate{I_1}&\ctrl{1}&\gate{I_1}&\ctrl{1}&\qw\\
\lstick{}&\gate{V_j}&\targ   &\gate{I_j}&\targ   &\qw}
}
\end{figure}

Therefore, the proof of part one is completed. Now we will prove part two. We set all the parameters in the two layers $(\layer)^2$ to be zero except the two $B_j$ blocks. Then we will use two $B_j$ blocks to represent $\CNOT_{1j}$. We decompose $\CNOT_{1j}$ up to an irrelevant global phase factor $e^{-i\pi/4}$ in the following way:
\begin{eqnarray}
\left(W_4 \otimes W_{3}\right)\CNOT_{1j}\left(I_{1} \otimes W_{2}\right)\CNOT_{1j}\left(I_{1} \otimes W_{1}\right), \nonumber
\end{eqnarray}
where we set  $W_{1}=R_{z}\left(\frac{\pi}{2}\right), W_{2}=R_{y}\left(\frac{\pi}{2}\right),  W_{3}=R_{z}\left(-\frac{\pi}{2}\right) R_{y}\left(-\frac{\pi}{2}\right)$, and $W_4 =  R_{z}\left(-\frac{\pi}{2}\right)$. Here we denote  $R_z(\theta)=e^{-i\theta Z/2}$ and $R_y(\theta)=e^{-i\theta Y/2}$ as the rotation operators along the z-axis and y-axis on the Bloch Sphere, respectively. In addition, $W_4$ and the identity gate $I_1$ act on the first qubit, and $W_1, W_2$, and $W_3$ act on the $j$-th qubit.

Hence, we use the $B_j$ block in the first layer to represent  $\CNOT_{1j}\left(I_{1} \otimes W_{2}\right)\CNOT_{1j}\left(I_{1} \otimes W_{1}\right)$ in this way:
\begin{figure}[H]
\centerline{
\Qcircuit @C=1em @R=.7em{
\lstick{}&\gate{I_1}&\ctrl{1}&\gate{I_1}&\ctrl{1}&\qw\\
\lstick{}&\gate{W_1}&\targ   &\gate{W_2}&\targ&\qw}
}
\end{figure}
Finally, we use the second level-$2$ block $B_j$ to represent $W_4 \otimes W_3$, where $W_4$ acts on the first qubit and $W_3$ acts on the $j$-th qubit. Therefore, two layers $(\layer)^2$ can represent $\CNOT_{1j}$ up to an irrelevant global phase, and this completes the proof of part two.
\end{proof}

\section{PAC learnability of \texorpdfstring{$\mathcal{F}$}{}}

The following lemma shows that any finite hypothesis space $\cF'$ is PAC learnable.

\newcommand{\finitef}{Assume that the hypothesis space $\mathcal{F}'$ is finite,  
 $\delta \in(0,1]$, $\epsilon>0$ and the range of the loss function is in an interval of length $c \geq 0$. Then if the size of the training set $
|S| \geq \frac{c^{2}}{2 \epsilon^{2}}\left(\ln |\mathcal{F}'|+\ln \frac{2}{\delta}\right)
$, the event
 $\forall \fh \in \mathcal{F}':|\hat{R}( \fh)-R( \fh)| \leq \epsilon$ holds with probability at least $1-\delta$ over repeated sampling of the training set $S$.
}
\begin{lemma}[\cite{wolf2018mathematical}, Corollary 1.2]
\label{finitef}
\finitef
\end{lemma}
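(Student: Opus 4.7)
The plan is to prove this via the classical two-step approach used throughout statistical learning theory: establish Hoeffding-type concentration for a single hypothesis, then apply a union bound over the finite class $\cF'$. Since the lemma is attributed to Ref.\ \cite{wolf2018mathematical}, the proof is essentially bookkeeping around these two ingredients.

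First, I would fix an arbitrary $\fh \in \cF'$ and view the empirical risk as a sample mean. The training set $\cS = \{(x_i,y_i)\}_{i=1}^{|\cS|}$ consists of i.i.d.\ draws from $P$, so the random variables $Z_i := \mathcal{L}(y_i, \fh(x_i))$ are i.i.d.\ with mean $\mathbb{E}[Z_i] = R(\fh)$, and by the boundedness assumption each $Z_i$ lies in an interval of length $c$. Since $\hat{R}(\fh) = \frac{1}{|\cS|}\sum_i Z_i$, Hoeffding's inequality for bounded i.i.d.\ random variables yields
\begin{eqnarray}
\mathbb{P}_{\cS}\left(|\hat{R}(\fh) - R(\fh)| > \epsilon\right) \;\leq\; 2\exp\!\left(-\frac{2|\cS|\epsilon^{2}}{c^{2}}\right). \nonumber
\end{eqnarray}

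Second, I would promote this single-hypothesis bound to a uniform bound over all of $\cF'$ by a union bound, exploiting the finiteness of $\cF'$:
\begin{eqnarray}
\mathbb{P}_{\cS}\!\left(\exists \fh \in \cF' : |\hat{R}(\fh) - R(\fh)| > \epsilon\right) \;\leq\; 2|\cF'|\exp\!\left(-\frac{2|\cS|\epsilon^{2}}{c^{2}}\right). \nonumber
\end{eqnarray}
It then suffices to force the right-hand side to be at most $\delta$. Taking logarithms and solving, the condition $2|\cF'|\exp(-2|\cS|\epsilon^{2}/c^{2}) \leq \delta$ is equivalent to $|\cS| \geq \frac{c^{2}}{2\epsilon^{2}}\bigl(\ln|\cF'| + \ln\frac{2}{\delta}\bigr)$, which is exactly the sample-complexity bound stated in the lemma. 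Passing to the complementary event gives the uniform deviation inequality $\forall \fh \in \cF' : |\hat{R}(\fh) - R(\fh)| \leq \epsilon$ with probability at least $1-\delta$.

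There is no genuine obstacle in this proof; it is a textbook Hoeffding-plus-union-bound argument. The only points requiring a brief justification are (i) that $Z_i$ are i.i.d., which follows from the i.i.d.\ sampling of $\cS$ together with $\fh$ being a fixed (data-independent) function, and (ii) the applicability of Hoeffding's inequality, which uses only the interval-length-$c$ assumption rather than any specific value of the bounds. As a sanity check, in the application to Theorem \ref{theoremb} one uses $c = 1$ (the trace-distance loss lies in $[0,1]$) and rescales $\epsilon \mapsto \epsilon/6$, so the prefactor $\frac{c^{2}}{2\epsilon^{2}}$ becomes $\frac{1}{2(\epsilon/6)^{2}} = \frac{18}{\epsilon^{2}}$, matching the bound $|\cS|\geq\frac{18}{\epsilon^{2}}(\ln|\cF'|+\ln\frac{2}{\delta})$ invoked there.
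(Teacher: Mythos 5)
Your proof is correct and is exactly the standard Hoeffding-plus-union-bound argument that the cited reference (\cite{wolf2018mathematical}, Corollary 1.2) uses; the paper itself does not reprove this lemma but imports it verbatim. Your constant bookkeeping is also right, including the sanity check that $c=1$ and $\epsilon\mapsto\epsilon/6$ reproduce the $\frac{18}{\epsilon^{2}}\left(\ln|\mathcal{F}'|+\ln\frac{2}{\delta}\right)$ threshold invoked in the proof of Theorem \ref{theoremb}.
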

\newcommand{\fone}{\theta^{F_1}}
\newcommand{\ftwo}{\theta^{F_2}}
\newcommand{\fa}{f_1}
\newcommand{\fb}{f_2}

Our circuit $F$ consists of $R_x(\theta)$, $H$, and $\CNOT$ gates. By assigning two sets of different values to the variational parameters $\boldsymbol{\theta}$, we can get two distinct circuits $F_1$ and $F_2$, and their corresponding hypothesis functions $f_1$ and $f_2$ are different. We note that although $F_1$ and $F_2$ differ in the value of their variational parameters $\boldsymbol{\theta}$, their ordering of the gates ($R_x(\theta)$, $H$, and $\CNOT$ gates) are the same.
We will show that when all the variational parameters in circuit $F_1$ and $F_2$ are close enough, the risk and empirical risk of $f_1$ and $f_2$ will be close. To prove this, we first define the distance of two unitary matrices $U_1$, $U_2 \in \mathbb{C}^{2^n \times 2^n}$ as the $2$-norm of the matrix $U_1 - U_2$: 
\begin{eqnarray}
E(U_1,U_2) = \|U_1 - U_2\|_2 =  \sup\limits_{x \in \cX} \|(U_1 - U_2)\ket{\psi(x)}\|_2. \nonumber
\end{eqnarray}
Now we introduce the following lemma about the function $E(U_1,U_2)$. 

\newcommand{\funcE}{
\bk{
   The function $E(U_1,U_2)$ satisfies the following properties:}
   
\bk{   1. Let $U$, $V$ be the $R_x(\theta)$, $R_x(\theta+\epsilon)$ gates acting on the $j$-th qubit, respectively, where
$ \epsilon \in(0,1), \theta \in[0,2 \pi), j = 1,2,\cdots, n$. Then $E(U,V) \leq \epsilon$.}
   
\bk{   2. $E\left(U_l U_{l-1} \cdots U_1, V_{l} V_{l-1} \cdots V_{1}\right) \leq \sum\limits_{j=1}^{l} E\left(U_{j}, V_{j}\right),$
   where $U_1,U_2,\cdots,U_l,V_1,V_2,\cdots,V_l$ are unitary matrices. }
}
\begin{proposition}
\label{funcE}
\funcE
\end{proposition}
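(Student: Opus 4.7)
The plan is to handle the two parts separately: Part~1 by an explicit computation on a single-qubit rotation, and Part~2 by a standard telescoping argument combined with the triangle inequality and unitary invariance of the operator $2$-norm. Throughout, I will use the fact that $E(U_1,U_2)$ as defined is just the spectral norm $\|U_1-U_2\|_2$, so all the standard matrix-norm tools apply, and that tensoring a single-qubit operator with the identity on the remaining $n-1$ qubits preserves its spectral norm.

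For Part~1, I would first reduce the problem from $n$ qubits to one qubit: since $U$ and $V$ act nontrivially only on qubit $j$ and agree with the identity elsewhere, $U-V$ equals $(R_x(\theta)-R_x(\theta+\epsilon))$ on qubit $j$ tensored with the identity, so $E(U,V)$ equals the spectral norm of $R_x(\theta)-R_x(\theta+\epsilon)$ on $\mathbb{C}^2$. Next I would factor $R_x(\theta+\epsilon)-R_x(\theta) = R_x(\theta)\bigl(R_x(\epsilon)-I\bigr)$; since $R_x(\theta)$ is unitary, the norm collapses to $\|R_x(\epsilon)-I\|_2$. Using $R_x(\alpha)=\cos(\alpha/2)I - i\sin(\alpha/2)X$, the matrix $R_x(\epsilon)-I$ has eigenvalues $(\cos(\epsilon/2)-1)\pm i\sin(\epsilon/2)$ with modulus $2|\sin(\epsilon/4)|$. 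Finally I bound $2|\sin(\epsilon/4)|\leq \epsilon/2 \leq \epsilon$ using $|\sin x|\leq |x|$, which gives the desired inequality (in fact with room to spare).

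For Part~2, I would introduce the telescoping identity
\begin{equation*}
U_l\cdots U_1 - V_l\cdots V_1 \;=\; \sum_{j=1}^{l} U_l\cdots U_{j+1}\,(U_j - V_j)\,V_{j-1}\cdots V_1,
\end{equation*}
which one verifies by an induction on $l$ (or by observing that the sum collapses). Applying the triangle inequality and sub-multiplicativity of the spectral norm then gives
\begin{equation*}
\|U_l\cdots U_1 - V_l\cdots V_1\|_2 \;\leq\; \sum_{j=1}^{l} \|U_l\cdots U_{j+1}\|_2\,\|U_j-V_j\|_2\,\|V_{j-1}\cdots V_1\|_2.
\end{equation*}
Each prefix/suffix product is a product of unitaries and hence itself unitary, so each of those factors equals $1$, and the bound reduces to $\sum_{j=1}^{l} E(U_j,V_j)$.

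Neither step presents a genuine obstacle; the main subtlety is just being careful with the telescoping expansion in Part~2 (keeping the $U$'s on the left and the $V$'s on the right consistently) and with the factor-of-$2$ book-keeping in the trigonometric estimate in Part~1. Both properties are essentially textbook facts about the operator norm, so the proof will be short and self-contained.
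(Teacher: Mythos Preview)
Your proof is correct. Part~2 is essentially identical in spirit to the paper's (the paper simply cites Nielsen--Chuang, Section~4.5.3, where the same telescoping argument appears). For Part~1, both you and the paper first reduce to $\|I-R_x(\epsilon)\|_2$ via unitary invariance, but then diverge: the paper expands $R_x(\epsilon)=e^{-i\epsilon X/2}$ as a Taylor series and bounds the tail by $e^{\epsilon/2}-1\le\epsilon$ (which is where the hypothesis $\epsilon\in(0,1)$ is actually used), whereas you compute the eigenvalues of $R_x(\epsilon)-I$ exactly and obtain $2|\sin(\epsilon/4)|\le\epsilon/2$. Your route is slightly cleaner and yields a sharper constant; it also makes the restriction $\epsilon<1$ unnecessary for the stated bound.
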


\begin{proof}
The second property is shown in \cite{nielsen2002quantum}, Section 4.5.3. We need only prove the first property.
\begin{eqnarray}
E(U,V) &= \|U-V\|_2 \nonumber \\ 
&= \|R_x(\theta) - R_x(\theta + \epsilon)\|_2 \nonumber \\
&= \left\|I-R_{x}(\epsilon)\right\|_2 \nonumber\\
&\myeqi \|I-(I-i \epsilon X / 2+ (i \epsilon X / 2)^2/(2!) - \cdots)\|_2 \nonumber\\
&\leq \|i \epsilon X / 2\|_2 + \|(i \epsilon X / 2)^2/(2!)\|_2 + \cdots \nonumber \\
&\myeqii \exp(\epsilon/2)-1 \nonumber \\
&\leq \epsilon, \nonumber
\end{eqnarray}
where (i) uses Taylor's expansion of the operator $R_{x}(\epsilon) = e^{-i \epsilon X / 2}$, and (ii) uses Taylor's expansion of $\exp(\epsilon/2)$ and that $\|X\|_2 = \|I\|_2 =  1$. 
\end{proof}

We recall that $\cL(y_1,y_2)$ is the trace distance of two pure states $\ket{\psi(y_1)}$ and $\ket{\psi(y_2)}$.  Then we introduce the following properties of $\cL(y_1,y_2)$.

\newcommand{\ya}{\ket{\psi(y_1)}}
\newcommand{\ay}{\bra{\psi(y_1)}}
\newcommand{\yb}{\ket{\psi(y_2)}}
\newcommand{\yac}{\ket{\psi(y_1)^{\perp}}}
\newcommand{\yaa}{\ket{\psi(y_1)}\bra{\psi(y_1)}}
\newcommand{\ybb}{\ket{\psi(y_2)}\bra{\psi(y_2)}}
\newcommand{\ycc}{\ket{\psi(y_1)^{\perp}}\bra{\psi(y_1)^{\perp}}}
\newcommand{\yca}{\ket{\psi(y_1)^{\perp}}\bra{\psi(y_1)}}
\newcommand{\yacc}{\ket{\psi(y_1)}\bra{\psi(y_1)^{\perp}}}
\newcommand{\funcL}{
\bk{
The function $\cL: \cY \times \cY \rightarrow [0,1]$ satisfies the following two  properties:}
\bk{
\begin{enumerate}
   \item For any $y_1,y_2,y_3 \in \cY$, we have  
   $\cL(y_1,y_3) - \cL(y_2,y_3) \leq \cL(y_1,y_2)$. 
   \item For any $y_1,y_2 \in \cY$, we have 
   $\cL(y_1,y_2) \leq \|\ket{\psi(y_1)} -  \ket{\psi(y_2)} \|_2$.
\end{enumerate}}
}
\begin{proposition}
\label{funcL}
\funcL
\end{proposition}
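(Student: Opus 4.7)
The plan is to prove each property using standard facts about the trace norm plus an explicit computation in a two-dimensional subspace. Both properties should follow quickly once the right basis is chosen.

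For part 1, I would recognize it as the triangle inequality for trace distance. Setting $A_i = \ket{\psi(y_i)}\bra{\psi(y_i)}$, the definition gives $\mathcal{L}(y_i, y_j) = \frac{1}{2}\|A_i - A_j\|_1$. Since $\|\cdot\|_1$ is a norm on Hermitian matrices, the usual triangle inequality $\|A_1 - A_3\|_1 \leq \|A_1 - A_2\|_1 + \|A_2 - A_3\|_1$ holds, obtained by writing $A_1 - A_3 = (A_1 - A_2) + (A_2 - A_3)$. Dividing by $2$ and rearranging yields the claim.

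For part 2, the plan is to reduce to a two-dimensional calculation. Choose an orthonormal vector $\ket{\psi(y_1)^{\perp}}$ in the span of $\ket{\psi(y_1)}$ and $\ket{\psi(y_2)}$, and expand $\ket{\psi(y_2)} = \alpha \ket{\psi(y_1)} + \beta \ket{\psi(y_1)^{\perp}}$ with $|\alpha|^2 + |\beta|^2 = 1$. I would then write out $\ket{\psi(y_1)}\bra{\psi(y_1)} - \ket{\psi(y_2)}\bra{\psi(y_2)}$ as a $2\times 2$ Hermitian matrix in the basis $\{\ket{\psi(y_1)}, \ket{\psi(y_1)^{\perp}}\}$. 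This matrix is traceless with determinant $-|\beta|^2$, so its two eigenvalues are $\pm|\beta|$ and its trace norm is $2|\beta|$, giving $\mathcal{L}(y_1,y_2) = |\beta|$. On the other hand, a direct computation yields $\|\ket{\psi(y_1)} - \ket{\psi(y_2)}\|_2^2 = |1 - \alpha|^2 + |\beta|^2 \geq |\beta|^2$. Taking square roots gives $\mathcal{L}(y_1,y_2) = |\beta| \leq \|\ket{\psi(y_1)} - \ket{\psi(y_2)}\|_2$.

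Neither step presents a real obstacle; the only thing to handle carefully is the sign/phase bookkeeping in the $2\times 2$ matrix when computing its eigenvalues. Part 1 is essentially the triangle inequality, and Part 2 is the well-known relationship $\mathcal{L}(y_1,y_2) = \sqrt{1 - |\langle\psi(y_1)|\psi(y_2)\rangle|^2}$ compared against the Euclidean chord length $\sqrt{2 - 2\mathrm{Re}\,\langle\psi(y_1)|\psi(y_2)\rangle}$; the inequality reduces to the trivially true $(1 - \mathrm{Re}\,\alpha)^2 \geq 0$.
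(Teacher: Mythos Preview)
Your proposal is correct and essentially matches the paper's proof: Part~1 is the trace-norm triangle inequality (the paper simply cites Nielsen--Chuang), and for Part~2 both you and the paper use $\cL(y_1,y_2)=\sqrt{1-|\langle\psi(y_1)|\psi(y_2)\rangle|^2}$ and reduce the desired bound to the trivially true $|1-\alpha|^2\ge 0$ (equivalently $2-z-z^*\ge 1-|z|^2$). The only cosmetic difference is that you derive the fidelity--trace-distance identity via an explicit $2\times 2$ eigenvalue computation, whereas the paper cites it from Nielsen--Chuang.
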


\begin{proof}
The first part of this lemma is the  triangle inequality, which is proved in \cite{nielsen2002quantum}, Section 9.2.1. Here, we only prove the second property.
We denote $F(\ya,\yb) = \left|\langle \psi(y_1) | \psi(y_2) \rangle\right|$ as the fidelity between the two states $\ya$ and $\yb$. Then we will arrive at \begin{eqnarray}
\cL(y_1,y_2) &= \frac{1}{2}\| \yaa - \ybb \|_1  \nonumber \\
&\myeqiii \sqrt{1 - F(\ya,\yb)^2}, \nonumber
\end{eqnarray}
where the proof of equation (iii) is given in  \cite{nielsen2002quantum}, Section 9.2.3.

In addition, we note that for any complex number $z\in \mathbb{C}$ and its complex conjugate $z^* \in \mathbb{C}$, as $(|z| - 1)^2 \geq 0$, we have $2-2|z| \geq 1-|z|^2$.
Hence, we get $2-z-z^* \geq  2-2|z| \geq 1-|z|^2$.
Let $z = \langle \psi(y_1) | \psi(y_2) \rangle$, we obtain that 
\begin{eqnarray}
\cL(y_1,y_2) &= \sqrt{1 - F(\ya,\yb)^2} = \sqrt{1 - |z|^2} \nonumber \\
&\leq \sqrt{2-z-z^*} = \|\ya - \yb \|_2. \nonumber
\end{eqnarray}
This completes the proof.
\end{proof}

Now we will use the properties of $E(U_1,U_2)$ and $\cL(y_1,y_2)$ to show that the differences of both the risk and empirical risk between $f_1$ and $f_2$ are bounded by $\epsilon$, where the hypothesis functions $f_1$ and $f_2$ correspond to the variational circuits $F_1$ and $F_2$, respectively.

\newcommand{\near}{
\bk{
We denote $\boldsymbol{\theta}^{F_1} = (\fone_1, \fone_2, \cdots, \fone_l)$ as a vector containing all the variational  parameters in $F_1$, where $l$ is the number of $R_x(\theta)$ gates in circuit $F$, and $\fone_i$ is the value of the variational parameter characterizing the $i$-th x-rotation of $F_1$.
Similarly, we denote $\boldsymbol{\theta}^{F_2} = (\ftwo_1, \ftwo_2, \cdots, \ftwo_l)$ as a vector containing all the variational parameters in $F_2$.}

\bk{Let $f_1, f_2 \in \cF$ be the corresponding hypothesis functions of $F_1, F_2$, respectively.
Then given any probability distribution $P$ over $\cX \times  \cY$ and training set $S = (x_i,y_i)_{i=1,2,\cdots,m}$,  
the following two inequalities hold if $\|\boldsymbol{\theta}^{F_1} - \boldsymbol{\theta}^{F_2}\| _{\infty} \leq \frac{\epsilon}{K n^{c+1}}$ ($K$ is a large enough constant):}
\bk{\begin{eqnarray}
|R(f_1) - R(f_2)| \leq \epsilon, \nonumber \\
|\rhat(f_1) - \rhat(f_2)| \leq \epsilon. \nonumber 
\end{eqnarray}}
}
\begin{proposition}
\label{near}
\near
\end{proposition}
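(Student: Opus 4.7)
The plan is to reduce the bound on $|R(f_1)-R(f_2)|$ and $|\hat R(f_1)-\hat R(f_2)|$ to a pointwise bound on $\cL(f_1(x),f_2(x))$, and then reduce that to a bound on the operator-level distance $E(F_1,F_2)$, which can finally be controlled by the parameter distance $\|\boldsymbol{\theta}^{F_1}-\boldsymbol{\theta}^{F_2}\|_\infty$ via Proposition~\ref{funcE}. The critical observation is that $F_1$ and $F_2$ share the same gate ordering and differ only in the angles of the $l=O(n^{c+1})$ $R_x$-rotations; the $H$ and $\CNOT$ gates are identical in both circuits.

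First I would write $F_1 = G_N G_{N-1}\cdots G_1$ and $F_2 = G_N' G_{N-1}'\cdots G_1'$, where $G_k = G_k'$ whenever the $k$-th gate is an $H$ or a $\CNOT$, and where the paired $R_x$ gates differ only in their rotation angles. Applying part~1 of Proposition~\ref{funcE} to each pair of rotation gates bounds $E(G_k,G_k')$ by $\|\boldsymbol{\theta}^{F_1}-\boldsymbol{\theta}^{F_2}\|_\infty \le \epsilon/(Kn^{c+1})$, and zero otherwise. Then part~2 of Proposition~\ref{funcE} gives
\begin{eqnarray}
E(F_1,F_2) \;\le\; \sum_{k=1}^{N} E(G_k,G_k') \;\le\; l\cdot\frac{\epsilon}{K n^{c+1}} \;\le\; \frac{12 M\,\epsilon}{K}, \nonumber
\end{eqnarray}
using $l\le 12Mn^{c+1}$. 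By definition of $E$, this yields $\|F_1\ket{\psi(x)}-F_2\ket{\psi(x)}\|_2 \le 12M\epsilon/K$ for every $x\in\cX$.

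Next I would convert the vector distance to a loss distance. Part~2 of Proposition~\ref{funcL} gives $\cL(f_1(x),f_2(x)) \le \|F_1\ket{\psi(x)}-F_2\ket{\psi(x)}\|_2 \le 12M\epsilon/K$ pointwise. Then part~1 of Proposition~\ref{funcL} (the triangle inequality for $\cL$) implies
\begin{eqnarray}
\bigl|\cL(y,f_1(x))-\cL(y,f_2(x))\bigr| \;\le\; \cL(f_1(x),f_2(x)) \;\le\; \frac{12M\epsilon}{K} \nonumber
\end{eqnarray}
for every pair $(x,y)\in\cX\times\cY$. Integrating this pointwise bound against the unknown distribution $P$ yields $|R(f_1)-R(f_2)| \le 12M\epsilon/K$, and averaging over the training set yields $|\hat R(f_1)-\hat R(f_2)| \le 12M\epsilon/K$. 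Choosing the constant $K \ge 12M$ (which is allowed since $M$ is independent of $n,\epsilon,\delta$) closes both inequalities at $\epsilon$.

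Honestly there is no genuinely hard step here; the proof is a bookkeeping exercise that stitches together the two preparatory propositions. The only mild subtlety is making sure the constant $K$ in the hypothesis is chosen \emph{after} absorbing the factor $12M$ from the gate count and any constants hidden in $l=O(n^{c+1})$, so I would state the final constant condition explicitly to keep the proof self-contained.
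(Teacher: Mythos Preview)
Your proposal is correct and follows essentially the same approach as the paper's own proof: decompose $F_1,F_2$ gate-by-gate, apply Proposition~\ref{funcE} to bound $E(F_1,F_2)$ by $l\cdot\epsilon/(Kn^{c+1})$, then use both parts of Proposition~\ref{funcL} to pass from the operator distance to a uniform pointwise bound on $|\cL(y,f_1(x))-\cL(y,f_2(x))|$, which immediately controls both the risk and empirical risk differences. The only cosmetic difference is that the paper absorbs the gate-count constant into $K$ at the outset (choosing $K$ so that $l\le Kn^{c+1}$, giving $E(F_1,F_2)\le\epsilon$ directly), whereas you carry the factor $12M$ through and impose $K\ge 12M$ at the end; both are equivalent.
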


\begin{proof}
   First, we will prove that $E(F_1,F_2) \leq \epsilon$ when $\|\boldsymbol{\theta}^{F_1} - \boldsymbol{\theta}^{F_2}\|_{\infty} \leq \frac{\epsilon}{K n^{c+1}}$. Then we will use it to show the risk and empirical risk of $\fa$ and $\fb$ are close.
   
   As $F$ is composed of $H$ gates, $R_x(\theta)$ gates and $\CNOT$ gates, we can write $F_1 = U_l U_{l-1} \cdots U_1 $ and $F_2 = V_l V_{l-1} \cdots V_1 $, where $U_i$ is the $i$-th gate in $F_1$, and $V_i$ is the $i$-th gate in $F_2$. As $U_i$ and $V_i$ are of the same type of gates, we can prove that $E(U_i, V_i) \leq \frac{\epsilon}{K n^{c+1}}$ by separating different cases on the types of $U_i$ and $V_i$:
   
 Case I:  If $U_i$ and $V_i$ are both $H$ gates or both $\CNOT$ gates, as there is no variational parameter in $H$ or $\CNOT$, we have $U_i = V_i$, and we obtain that $E(U_i,V_i) = 0$.
   
Case II:   If $U_i$ and $V_i$ are both $R_x(\theta)$ gates, as the difference of $\theta_i^{F_1}$ and $\theta_i^{F_2}$ is at most $\frac{\epsilon}{K n^{c+1}}$, by the first property of Proposition \ref{funcE}, we have $ E(U_i,V_i) \leq \frac{\epsilon}{K n^{c+1}}$.
   
   We note that $l = O(n^{c+1})$ by our construction of $F$. By the second property of Proposition \ref{funcE} and choosing a large enough constant $K$ such that $l \leq Kn^{c+1}$, we can get 
   \begin{eqnarray}
E(F_1,F_2) &= E(U_l U_{l-1} \cdots U_1, V_l V_{l-1} \cdots V_1) \nonumber \\
&\leq \sum_{j=1}^{l} E\left(U_{j}, V_{j}\right) \nonumber \\
&\leq \sum_{j=1}^{l} \frac{\epsilon}{K n^{c+1}} \leq \epsilon. \nonumber
   \end{eqnarray}

Now we can bound the differences of the risk and empirical risk between the two hypothesis functions $\fa$ and $\fb$, respectively. For convenience, we define $D(f_1,f_2)$ as $\sup\limits_{x \in \cX, y \in \cY} \bigg| \cL(y, \fa(x)) - \cL(y, \fb(x)) \bigg|$. We observe that both $|R(f_1) - R(f_2)|$ and $|\rhat(f_1) - \rhat(f_2)|$ can be bounded by $D(f_1,f_2)$. Hence, we will prove that $D(f_1,f_2) \leq \epsilon$, and we can obtain the two inequalities $|R(f_1) - R(f_2)| \leq \epsilon$ and $|\rhat(f_1) - \rhat(f_2)| \leq \epsilon$.
\begin{eqnarray}
D(f_1,f_2) &\myineqiv \sup\limits_{x \in \cX}    \cL(\fa(x), \fb(x)) \nonumber \\
&\myineqv \sup\limits_{x \in \cX} \| \fa(x) - \fb(x) \|_2 \ \nonumber \\
&= \sup\limits_{x \in \cX} \| (F_1  - F_2) \ket{\psi(x)} \|_2 \nonumber \\
&= E(F_1,F_2) \leq \epsilon, \nonumber
\end{eqnarray}
where (iv) uses the first property of function $\cL$ in Proposition \ref{funcL}, and (v) uses the second property of function $\cL$ in Proposition \ref{funcL}. This completes the proof of Proposition \ref{near}.
\end{proof}

We note that in our proof of Theorem \ref{theoremb}, we used Lemma \ref{finitef}  and Proposition \ref{near} to show that $\forall \fh \in \cF:  \left|\hat{R}\left(\fh\right)-R\left(\fh\right)\right| \leq \frac{\epsilon}{2}$ holds with probability $1-\delta$. To prove that $\cF$ is PAC learnable, we introduce the following technical lemma.

\newcommand{\unifconv}{
\bk{
Assume  $\forall \fh \in \cF:  \left|\hat{R}\left(\fh\right)-R\left(\fh\right)\right| \leq \frac{\epsilon}{2}$ holds. We denote $h = \arg\min\limits_{h' \in \cF} R(h')$, and $\erm = \arg\min\limits_{h' \in \cF} \rhat(h')$. Then we have
\begin{eqnarray}
R(\erm) - R(h) \leq 2\sup\limits_{h' \in \cF} \left|\rhat(h') - R(h')\right| \leq \epsilon. \nonumber
\end{eqnarray}
}
}
\begin{proposition}
\label{unifconv}
\unifconv
\end{proposition}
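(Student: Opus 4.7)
The plan is to invoke the standard ``empirical risk minimization is near-optimal under uniform convergence'' decomposition. I would begin from the telescoping identity
\[
R(\erm) - R(h) \;=\; \bigl[R(\erm) - \rhat(\erm)\bigr] + \bigl[\rhat(\erm) - \rhat(h)\bigr] + \bigl[\rhat(h) - R(h)\bigr],
\]
obtained by adding and subtracting $\rhat(\erm)$ and $\rhat(h)$, and then bound the three bracketed quantities separately.

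The middle bracket is non-positive: since $\erm = \arg\min\limits_{h'\in\cF}\rhat(h')$ and $h \in \cF$, we have $\rhat(\erm) \leq \rhat(h)$. Each of the two outer brackets is bounded in absolute value by $\sup\limits_{h'\in\cF}|\rhat(h') - R(h')|$, because both $\erm$ and $h$ belong to $\cF$. Summing the three bounds gives the first inequality
\[
R(\erm) - R(h) \;\leq\; 2\sup\limits_{h'\in\cF}|\rhat(h') - R(h')|.
\]

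For the second inequality, I would simply invoke the standing hypothesis of the proposition, namely $|\rhat(\fh)-R(\fh)|\leq \epsilon/2$ for every $\fh\in\cF$. Taking the supremum over $\cF$ and multiplying by two gives $R(\erm) - R(h) \leq \epsilon$. No step presents a substantive obstacle: this is a textbook add-and-subtract argument, with all the statistical content absorbed into the uniform-deviation hypothesis that the proof of Theorem \ref{theoremb} has already supplied via Lemma \ref{finitef} and Proposition \ref{near}.
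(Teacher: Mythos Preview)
Your proof is correct and is essentially the same standard uniform-convergence argument as the paper's. The only cosmetic difference is that you telescope into three brackets and drop the middle one via $\rhat(\erm)\leq\rhat(h)$, whereas the paper splits into two terms and achieves the equivalent effect by rewriting $\rhat(\erm)-R(h)=\sup_{h'}(\rhat(\erm)-R(h'))\leq\sup_{h'}(\rhat(h')-R(h'))$; both routes land on the same $2\sup_{h'\in\cF}|\rhat(h')-R(h')|$ bound.
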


\begin{proof}
The proof of this inequality is given in \cite{wolf2018mathematical}, Section 1.2. We give the proof of the lemma here for completeness. To bound $R(\erm) - R(h)$, we    observe that it can be expressed as the sum of $R(\erm) - \rhat(\erm)$ and $\rhat(\erm) - R(h)$. Then we can use $\sup\limits_{h' \in \cF} \left|\rhat(h') - R(h')\right|$ to bound $R(\erm) - \rhat(\erm)$ and $\rhat(\erm) - R(h)$, respectively.
\begin{eqnarray}
R(\erm) - R(h) 
&=& R(\erm) - \rhat(\erm) + \rhat(\erm) - R(h) \nonumber \\
&\myeqvi &  R(\erm) - \rhat(\erm) + \sup\limits_{h' \in \cF} (\rhat(\erm) - R(h')) \nonumber \\
&\myineqvii&  R(\erm) - \rhat(\erm) + \sup\limits_{h' \in \cF} (\rhat(h') - R(h')) \nonumber \\
&\leq& |\rhat(\erm) - R(\erm)| + \sup\limits_{h' \in \cF} |\rhat(h') - R(h')| \nonumber \\
&\leq& 2\sup\limits_{h' \in \cF} \left|\rhat(h') - R(h')\right| \leq \epsilon, \nonumber \end{eqnarray}
where (vi) uses that $h = \arg\min\limits_{h' \in \cF} R(h')$, and (vii) uses that $\erm = \arg\min\limits_{h' \in \cF} \rhat(h')$.
\end{proof}

\end{document}